\renewcommand{\L}{\ComplexityFont{L}}
\newcommand{\LATER}[1]{\PackageWarning{miforbes}{#1}}
\DeclareMathOperator{\tr}{tr}
\newcommand{\eqdef}{:=}
\newcommand{\defeq}{=:}
	\numberwithin{equation}{section}
	\declaretheoremstyle[bodyfont=\it,qed=$\qedsymbol$]{noproofstyle} 
	\declaretheoremstyle[bodyfont=\it,qed=$\lozenge$]{defstyle} 
	\declaretheoremstyle[qed=$\lozenge$]{rmkstyle} 
	\theoremstyle{plain}
	\declaretheorem[name=Theorem,numberlike=equation]{theorem}
	\declaretheorem[name=Theorem,numberlike=equation,style=noproofstyle]{theoremwp}
	\declaretheorem[name=Theorem,unnumbered]{theorem*}
	\declaretheorem[name=Theorem,unnumbered,style=noproofstyle]{theoremwp*}
	\declaretheorem[name=Lemma,numberlike=equation]{lemma}
	\declaretheorem[name=Lemma,unnumbered]{lemma*}
	\declaretheorem[name=Lemma,unnumbered,style=noproofstyle]{lemmawp*}
	\declaretheorem[name=Corollary,numberlike=equation]{corollary}
	\declaretheorem[name=Corollary,unnumbered]{corollary*}
	\declaretheorem[name=Corollary,unnumbered,style=noproofstyle]{corollarywp*}
	\declaretheorem[name=Proposition,numberlike=equation]{proposition}
	\declaretheorem[name=Proposition,unnumbered]{proposition*}
	\declaretheorem[name=Proposition,unnumbered,style=noproofstyle]{propositionwp*}
	\declaretheorem[name=Claim,unnumbered]{claim*}
	\declaretheorem[name=Claim,unnumbered,style=noproofstyle]{claimwp*}
	\declaretheorem[name=Subclaim,unnumbered]{subclaim*}
	\declaretheorem[name=Subclaim,unnumbered,style=noproofstyle]{subclaimwp*}
	\declaretheorem[name=Theorem,numberlike=equation,style=noproofstyle]{theorem-cited}
	\declaretheorem[name=Definition,style=defstyle,numberlike=equation]{definition}
	\declaretheorem[name=Definition,style=defstyle,unnumbered]{definition*}
	\declaretheorem[name=Conjecture,style=defstyle,unnumbered]{conjecture*}
	\declaretheorem[name=Construction,style=defstyle,unnumbered]{construction*}
	\declaretheorem[name=Open Question,style=defstyle,unnumbered]{open*}
	\declaretheorem[name=Remark,style=rmkstyle,unnumbered]{remark*}
	\declaretheorem[name=Example,style=rmkstyle,unnumbered]{example*}
	\declaretheorem[name=Notation,style=rmkstyle,unnumbered]{notation*}
	\declaretheorem[name=Question,style=rmkstyle,unnumbered]{question*}
\newcommand{\F}{\mathbb{F}}  
\renewcommand{\R}{\mathbb{R}}  
\newcommand{\N}{\mathbb{N}}  
\renewcommand{\E}{\mathbb{E}}  
\renewcommand{\P}{\Pr} 
\DeclareSymbolFont{bbold}{U}{bbold}{m}{n}
\DeclareSymbolFontAlphabet{\mathbbold}{bbold}
\newcommand{\1}{\mathbbold{1}}
\newcommand{\ip}[2]{\left\langle#1 ,#2 \right\rangle}
\newcommand{\abs}[1]{\left|#1 \right|}
\newcommand{\norm}[1]{\|#1 \|}
\newcommand{\paren}[1]{\left(#1 \right)}
\newcommand{\ind}[1]{\1 \! \! \paren{#1}}
\newcommand{\prob}[1]{\P \! \paren{#1}}
\newcommand{\bits}{\{0,1\}}
\newcommand{\ch}{\chi_{\alpha}}
\newcommand{\chb}{\chi_{\beta}}
\renewcommand{\hat}{\widehat}
\newcommand{\Otilde}{\widetilde{O}}
\begin{document}

\title{Pseudorandom Generators for Read-Once Branching Programs, in any Order}
\date{August 19, 2018}
\author{Michael A. Forbes~\thanks{Email: \texttt{miforbes@illinois.edu}. Department of Computer Science, University of Illinois at Urbana-Champaign. Supported by NSF grant CCF-1755921.}
	\and
	Zander Kelley~\thanks{Email: \texttt{awk2@illinois.edu}. Department of Computer Science, University of Illinois at Urbana-Champaign. Supported by NSF grant CCF-1755921.}
}
\maketitle

\begin{abstract}
	A central question in derandomization is whether randomized logspace ($\RL$) equals deterministic logspace ($\L$). To show that $\RL=\L$, it suffices to construct explicit pseudorandom generators (PRGs) that fool polynomial-size read-once (oblivious) branching programs (roBPs).  Starting with the work of Nisan~\cite{n92}, pseudorandom generators with seed-length $O(\log^2 n)$ were constructed (see also \cite{inw94, gr14}).  Unfortunately, improving on this seed-length in general has proven challenging and seems to require new ideas.  

	A recent line of inquiry (e.g., \cite{bv10,gmrtv12,imz12,rsv13,svw14,hlv17,lv17,chrt17}) has suggested focusing on a particular limitation of the existing PRGs (\cite{n92,inw94,gr14}), which is that they only fool roBPs when the variables are read in a particular \emph{known} order, such as $x_1<\cdots<x_n$. In comparison, existentially one can obtain logarithmic seed-length for fooling the set of polynomial-size roBPs that read the variables under any fixed \emph{unknown} permutation $x_{\pi(1)}<\cdots<x_{\pi(n)}$.  While recent works have established novel PRGs in this setting for subclasses of roBPs, there were no known $n^{o(1)}$ seed-length explicit PRGs for general polynomial-size roBPs in this setting.

	In this work, we follow the ``bounded independence plus noise'' paradigm of Haramaty, Lee and Viola~\cite{hlv17,lv17}, and give an improved analysis in the general roBP unknown-order setting.  With this analysis we obtain an explicit PRG with seed-length $O(\log^3 n)$ for polynomial-size roBPs reading their bits in an unknown order.  Plugging in a recent Fourier tail bound of Chattopadhyay, Hatami, Reingold, and Tal~\cite{chrt17}, we can obtain a $\Otilde(\log^2 n)$ seed-length when the roBP is of constant width.
\end{abstract}

\section{Introduction}

\LATER{}

A central goal in complexity theory is to understand the power of randomness in computation, in particular the $\mathsf{P}$ vs $\BPP$ problem. A particularly natural method of showing $\mathsf{P}=\BPP$ is to construct an explicit $\varepsilon$-error pseudorandom generator (PRG) with sufficiently small seed-length $\ell$, ideally logarithmic. That is, a function $G:\bits^\ell\to\bits^n$ such that for any sufficiently efficiently computable $f$,
\[
	\left|\underset{y \in \bits^\ell}\E f(G(y)) - \underset{x \in \bits^n}\E f(x)\right|
	\le\varepsilon
	\;.
\]
Given such a PRG, one can then replace the randomness of a $\BPP$ algorithm with the pseudorandom output and then enumerate over all such seeds to obtain a deterministic algorithm by majority vote (if $\varepsilon$ is a sufficiently small constant). After decades of work, the hardness-vs-randomness paradigm (see for example Vadhan~\cite{v12}) shows that the construction of pseudorandom generators fooling general polynomial-size circuits is intimately tied to the quest for circuit lower bounds, which remain out of reach.  As such, a long line of work has sought to derandomize subclasses of $\BPP$. A particularly fruitful model to study has been randomized logspace ($\RL$), as not only do PRGs for $\RL$ have natural applications, but they can also be unconditionally constructed, for example as done in the seminal work of Nisan~\cite{n92}.

In particular, Nisan~\cite{n92} constructed a PRG fooling the non-uniform version of $\RL$, that is, the class of polynomial-size read-once (oblivious) branching programs (roBPs). A read-once branching program can be thought of as a finite automaton that takes in binary input strings $x$ of some fixed size $n$.
Additionally, the transition function of the automaton is allowed to depend on the position $i$ of each bit $x_i$.
We say that the branching program has width $w$ if the each layer of time the finite automaton has $w$ states.
Visually, branching programs can be represented as a layered acyclic digraph with $n+1$ layers, each containing $w$ nodes; the transition function is then represented by assigning two outgoing edges at each interior node into the next layer. The existence of a logspace-computable PRG $G : \bits^{\ell(n)} \rightarrow \bits^n$ for branching programs of width $w = n^{O(1)}$ is sufficient to show that $\BPL \subseteq \DSPACE(\ell(n))$.

\LATER{}
\LATER{}

Nisan~\cite{n92} gave a construction of a PRG with seed-length $\ell = O(\log^2n)$ for polynomial-width roBPs. 
Since then, there have been various constructions (\cite{inw94,gr14})  recovering the same seed-length using different techniques, but there has been little quantitative progress towards the desired seed-length $\ell = O(\log n)$.\footnote{By this we mean quantitative progress in the constant-error regime. Recently in \cite{bcg17}, Braverman, Cohen, and Garg give a hitting set (a ``one-sided'' PRG) with a better seed-length in the small-error regime than Nisan's generator.}
In fact, it remains open even to achieve a seed-length of $\ell = O(\log^2 n / \log \log n)$, even for constant-width branching programs.

The constructions of Nisan and (\cite{inw94,gr14}) all employ a common high-level approach which can be summarized by the following ``communication'' argument. The first half of a branching program can communicate with the second half only via the state reached in the middle layer.
Since there are only $w$ states in this layer, the second half of the program should ``learn'' roughly only $\log w$ bits of information about the input bits fed to the first half.
Because of this, it is safe to reuse all but roughly $\log w$ of the bits of entropy invested to generate the first half of the input string to generate the second half. 
This argument is then applied recursively to the left and right subprograms.

\LATER{}
\LATER{}

There is some feeling that this particular recursive paradigm will not yield generators with seed-lengths better than $O(\log^2 n)$ (\cite{bv10,rsv13,svw14}), and that new, more flexible techniques are required to make progress.
A crucial feature of this paradigm is that the PRG knows the order in which its pseudorandom output will be read.
In fact, it is known that Nisan's generator fails to generate pseudorandom strings that fool branching programs if they read the bits of the string in a different order than anticipated (\cite{t09}). 
The search for a different paradigm motivates the following challenge: construct a PRG that fools branching programs which may read their input in any order.
To formalize this, we define the notion of an unknown-order roBP: a function $g : \bits^n \rightarrow \bits$ of the form $g(x) = f( x_{\pi(1)}, x_{\pi(2)}, \ldots, x_{\pi(n)} )$, where $\pi$ is a permutation (independent of $x$) and $f$ is a roBP.

\LATER{}

Bogdanov, Papakonstantinou, and Wan~\cite{bpw11} constructed a PRG with seed-length $(1 - \Omega(1)) \cdot n$ for unknown-order roBP of width $w = n^{O(1)}$.
Their primary motivation for doing so was to derive the first generator with nontrivial seed-length that fools read-once formulas. 
Read-once formulas can be simulated by small-width read-once branching programs for \emph{some} order $\pi$, and hence existing generators for \emph{known}-order roBPs (\cite{n92,inw94,gr14}) would not suffice.
Impagliazzo, Meka, and Zuckerman~\cite{imz12} achieved a generator with seed length $\ell = (nw)^{1/2 + o(1)}$ for unknown-order roBP of width $w$.\footnote{
In fact, their generator fools the more general model of branching programs that may read the input bits any number of times and in any adaptive order.}

\section{Our Work}

Here, we give the first PRG with poly-logarithmic seed-length for $\poly(n)$-width unknown-order roBPs.

\begin{theorem}\label{main1}
	There exists an explicit $\nicefrac{1}{\poly(n)}$-error pseudorandom generator $G : \bits^{O( \log^3 n)} \rightarrow \bits^n$ for the class of functions computable by a $\poly(n)$-width read-once (oblivious) branching program in some variable order.
\end{theorem}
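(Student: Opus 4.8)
The plan is to follow the ``bounded independence plus noise'' paradigm of Haramaty--Lee--Viola, which constructs the pseudorandom distribution as $D = D_0 \oplus (T \wedge U)$, where $D_0$ is a $k$-wise independent distribution, $U$ is uniform, $T$ is a ``noise mask'' that independently includes each coordinate with probability $p$, and $\oplus, \wedge$ are coordinate-wise XOR and AND. The point is that this distribution is invariant under permutations of the coordinates (each of the three ingredients is symmetric), so a PRG built this way automatically handles unknown orders. The analysis reduces the problem to two tasks: (i) showing that a roBP cannot distinguish $D$ from uniform, and (ii) recursively realizing each ingredient with few random bits, since $k$-wise independence costs $O(k \log n)$ bits and the noised-uniform part $T \wedge U$ can itself be generated pseudorandomly by composing with a generator for the ``restriction'' obtained after the noise.

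\medskip

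The key steps I would carry out are as follows. First, establish the core Fourier-analytic lemma: if $f$ is (bounded by) a width-$w$ length-$n$ roBP, then $|\E[f \cdot \chi_S]|$ is small for all large $|S|$ in an appropriate averaged/weighted sense — more precisely, that $\sum_{|S| > k} \hat{f}(S)^2$-type tail bounds hold, or the weaker statement that summing $|\hat f(S)|$ over $S$ weighted by $(1-p)^{|S|}$ is small. This is where the roBP structure enters, presumably via Nisan-style or sandwiching arguments bounding the $L_1$ Fourier mass at each level, and it is the only place the order-obliviousness of the generator is ``paid for.'' Second, use this tail bound to show $|\E[f(D_0 \oplus (T\wedge U))] - \E[f(U)]|$ is small: expand in the Fourier basis, observe that $D_0$ kills all low-degree error terms (because $D_0$ is $k$-wise independent and $\chi_S$ with $|S|\le k$ has $\E[\chi_S(D_0)] = 0$ unless $S = \emptyset$), and $T \wedge U$ attenuates each remaining high-degree character $\chi_S$ by a factor $(1-p)^{|S|}$ in expectation, which combines with the tail bound. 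Third, set up the recursion: conditioned on $T$, the function $x \mapsto f((T \wedge U) \oplus x)$ restricted appropriately is again (a convex combination of) roBPs on roughly $(1-p)n$ variables, so one recurses on generating $D_0$ and $U$ for the sub-instance; choosing $p$ a small constant gives $O(\log n)$ levels of recursion, each contributing $O(k \log n) = O(\log^2 n)$ bits (taking $k = \polylog n$), for a total of $O(\log^3 n)$. Fourth, handle explicitness/uniformity of each component (small-bias or $k$-wise independent sample spaces, and the noise mask $T$, which needs its own $O(\log n)$-wise-ish pseudorandomness since truly independent $T$ is too expensive — this is a standard but delicate point).

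\medskip

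The main obstacle I expect is the first step — obtaining a Fourier tail / $L_1$-mass bound for \emph{general} polynomial-width roBPs that is strong enough to survive the noise attenuation with only a constant noise rate $p$. For constant-width roBPs one can invoke the Chattopadhyay--Hatami--Reingold--Tal tail bound (this is exactly what the excerpt says gives the improved $\widetilde{O}(\log^2 n)$ bound), but for polynomial width no such clean level-$k$ $L_2$ tail bound is available, so one must instead work with a cruder ``fractional'' or ``pseudorandom restriction'' estimate: bound the probability that a $p$-random restriction fails to substantially simplify the program, i.e. that many variables in some ``hard'' layer survive. Making this quantitatively compatible with a $\polylog$ seed — in particular controlling the error blow-up across the $O(\log n)$ recursion levels and ensuring the $k$-wise independence parameter and noise rate can be taken small enough — is the technical heart of the argument, and is presumably where the improved analysis over Haramaty--Lee--Viola lies.
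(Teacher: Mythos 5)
Your high-level framework is right, and you correctly locate the technical heart of the problem: without any nontrivial Fourier-mass or tail bound for $\poly(n)$-width roBPs, a naive union bound over high-degree Fourier coefficients is hopeless. But you leave that central difficulty unresolved (you say it is ``presumably where the improved analysis lies''), and in fact the idea you guess at --- trying to import a Fourier tail/$L_1$-mass bound for general poly-width roBPs, or falling back on a ``pseudorandom restriction simplifies the program'' estimate --- is not what the paper does. The paper explicitly states it needs \emph{no} nontrivial Fourier-mass bound for the $\poly(n)$-width case.

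The missing idea is a decomposition that exploits the \emph{product structure} of the roBP rather than its Fourier tail. Writing $F = F^{\le i} \cdot F^{>i}$ as a product of a prefix and a suffix subprogram, the paper shows (its Proposition~\ref{prop1}) that
\[
F = \hat F_0 + L + \sum_{i=1}^{n} H_i \cdot F^{>i},
\qquad
H_i = \sum_{\substack{\alpha \in \F_2^i,\ |\alpha| = k,\ \alpha_i = 1}} \hat F^{\le i}_\alpha\, \chi_\alpha,
\]
where $L$ collects the Fourier terms of degree strictly between $0$ and $k$. Each $H_i$ groups exactly those Fourier terms whose degree \emph{first reaches} $k$ upon reading bit $i$, and crucially $H_i$ depends only on bits $1,\dots,i$ while $F^{>i}$ depends only on bits $i+1,\dots,n$; under $D + T\wedge U$ these factors are independent, so $\|\E_U F^{>i}\| \le w^{1/2}$ can be peeled off trivially. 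Since $H_i$ is supported at the single degree $k$, a direct second-moment calculation (the paper's Lemma~\ref{prooflem1}, using that $D$ is $2k$-wise independent and $T$ is $k$-wise independent) gives $\E_{D,T}\|\E_U H_i(D+T\wedge U)\|^2 \le 2^{-k}\sum_{|\alpha|=k}\|\hat F^{\le i}_\alpha\|^2$, and the sum on the right is at most $w$ by Parseval applied to the prefix program $F^{\le i}$ --- a bound that is free for any branching program and requires no tail estimate. This replaces a $2^n$-fold triangle inequality by an $n$-fold one, giving error $nw/2^{k/2}$ per recursion level, and setting $k = O(\log n)$ with $r = O(\log n)$ levels yields seed length $O(r\cdot k\cdot\log n) = O(\log^3 n)$. (The CHRT Fourier-mass bound is only needed for the sharper $\Otilde(\log^2 n)$ constant-width result, exactly as you surmised.) Without this grouping-by-read-position decomposition and the Parseval-on-the-prefix trick, your proposal has no way to control the high-degree contribution for polynomial width, so as written it does not close.
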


As a corollary, we also derive the first PRG with poly-logarithmic seed length for read-once formulas (see \cite{bpw11} for the reduction).

\begin{corollarywp*} 
	There exists an explicit pseudorandom generator $G : \bits^{O( \log^3 n)} \rightarrow \bits^n$ for read-once formulas with constant fan-in. 
\end{corollarywp*}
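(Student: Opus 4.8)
The plan is to derive the corollary directly from \autoref{main1} via the reduction of Bogdanov, Papakonstantinou, and Wan~\cite{bpw11}. Concretely, I would first recall their observation that every read-once formula of constant fan-in on $n$ inputs is computed by a read-once \emph{oblivious} branching program of width $\poly(n)$ that reads its inputs in some (formula-dependent) order, and then simply apply \autoref{main1} to this function class. Since \autoref{main1} produces an explicit generator $G:\bits^{O(\log^3 n)}\to\bits^n$ with error $1/\poly(n)$ --- which is in particular pseudorandom --- the claimed seed-length and explicitness follow immediately.

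For the reduction itself I would argue as follows. View the read-once formula $\phi$ as a tree $T$: its $n$ leaves are the distinct input variables and its internal nodes are gates of fan-in at most a constant $c$. Fix a leaf ordering $x_{\pi(1)},\dots,x_{\pi(n)}$ and consider the branching program that reads the variables in this order and, at step $i$, retains only what is needed to finish the computation --- namely $\phi$ with the first $i$ variables fixed, which, because $T$ is a tree, is captured by the ``reduced formula'' obtained by collapsing every maximal subtree all of whose leaves have already been read into the constant it evaluates to. The number of states at layer $i$ is then the number of distinct such reduced formulas over all assignments to the first $i$ variables, and the crux of \cite{bpw11} is to choose $\pi$ --- essentially a recursive balanced traversal of $T$ --- so that this count stays $\poly(n)$ at every layer. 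Given such a $\pi$, the resulting program is oblivious (it reads $x_{\pi(i)}$ at step $i$ regardless of the input), read-once, of width $\poly(n)$, and accepts exactly the inputs with $\phi(x)=1$, placing $\phi$ in the class handled by \autoref{main1}.

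The step I expect to be the real work is this last combinatorial point --- bounding, over a carefully chosen traversal order, the number of distinct partial evaluations of $\phi$ at every cut by a fixed polynomial in $n$, where \emph{constant} fan-in is exactly what is used --- but fortunately it is precisely the reduction already established in \cite{bpw11}, so for the corollary it is invoked as a black box. (One should also observe the trivial point that a read-once formula need not depend on all $n$ of its inputs, but padding an $n'$-bit roBP with $n-n'$ unread layers keeps it in the same class, so the generator of \autoref{main1} fools it.)
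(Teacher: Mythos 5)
Your proposal matches the paper's approach exactly: the paper proves the corollary simply by citing the reduction of Bogdanov, Papakonstantinou, and Wan~\cite{bpw11} (read-once constant-fan-in formulas are computed by $\poly(n)$-width roBPs in some variable order) and then applying \autoref{main1}. Your additional sketch of how the reduction works (a subtree-size-aware DFS order keeping the number of distinct partial evaluations polynomial, together with the padding observation for formulas on fewer than $n$ variables) is accurate supplementary detail, not a divergence.
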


\subsection{Our Techniques}

We now briefly describe our proof technique at a high-level, with a more technical discussion given in \autoref{strategy}.  The main motivation comes from the ``bounded independence plus noise'' paradigm introduced by Haramaty, Lee, and Viola (\cite{hlv17,lv17}).   There, they study the addition (modulo 2) of a low-wise independent distribution with a pseudorandom noise distribution.  The intuition is that to fool a function $f$, it suffices to create a distribution to dampen all non-constant Fourier coefficients.  For low-degree Fourier coefficients, this can be achieved by a low-wise independent distribution.  In the other extreme, high-degree Fourier coefficients are dampened by coordinate-wise independent noise.  The addition of these two distributions can then inherit the best of both distributions and fool the desired function $f$.

However, the above outline has two challenges.  First, the noise distribution (picking each coordinate independently amongst $\{0,1\}$) requires too large a seed-length.  To address this, the work of Haramaty, Lee, and Viola (\cite{hlv17,lv17}) proposed to use a pseudorandom noise distribution where a pseudorandom set of coordinates are first chosen, and then the elements within those coordinates are then substituted with \emph{truly} random values.  While this proposal as stated still requires a large seed-length, the key observation is that the number of truly random bits has shrunk from $n$ originally to $\approx \nicefrac{n}{2}$ (for if you choose a (pseudorandom) subset of $\{1,\ldots,n\}$ it has size $\approx \nicefrac{n}{2}$).  Thus, one can hope to then recursively apply the construction in $\approx \log n$ rounds until no random bits are further required.

The second, more serious, challenge is to show that a single step of ``bounded independence plus (pseudorandom) noise'' actually fools the target function $f$, and doing so is the main contribution of this work (\autoref{prooflem2}).  The difficulty in addressing this is that there are too many high-degree Fourier coefficients, so that while each can be individually fooled by the construction, we cannot apply a union bound while maintaining a small seed-length.  Indeed, nothing so far in this discussion has used anything about the structure of the function $f$, which clearly must be used to obtain a small seed-length.

To meet this challenge, we avoid a naive union bound by instead grouping high-degree Fourier coefficients into a small number of groups, each of which can be dampened at once.  Specifically, we group high-degree Fourier coefficients into $n$ sets, where the $i$-th set contains those coefficients that ``become'' high-degree upon reading the $i$-th variable (\autoref{prop1}).\footnote{Note that this grouping depends on the order of the variables. However, this grouping only occurs in the \emph{analysis}, and that the construction itself is oblivious to the variable order.} On an intuitive level, one can then appeal to the ``bounded communication'' aspect of a roBP to argue that in the $i$-th grouping those variables read after the $i$-th variable can be essentially ignored. We are then left with Fourier coefficients that are of \emph{medium} degree (for if they were very-high degree they would have been put in the $j$-th group for some $j<i$).  The number of such medium-degree Fourier coefficients is not too large (because the degree is not too large), and yet each such coefficient is dampened by the noise distribution (because the degree is not too small). This then allows us to apply a union bound to obtain that we have dampened all the Fourier coefficients in the $i$-th grouping, and by applying this for all $i$ we obtain the result.

\subsection{The Constant-Width Case}

Although we give the first PRG with poly-logarithmic seed-length for the general case of poly-width unknown-order roBPs, such a seed-length has been qualitatively achieved in the constant-width case as a result of a recent line of work.
Reingold, Steinke, and Vadhan~\cite{rsv13} gave a PRG with seed-length $O(\log^2 n)$ for unknown-order \emph{permutation} branching programs of constant width.
Steinke, Vadhan, and Wan~\cite{svw14} gave a PRG with seed-length $\widetilde{O}(\log^3 n)$ for unknown-order width-3 branching programs. 
Chattopadhyay, Hatami, Reingold, and Tal~\cite{chrt17} gave a PRG with seed-length $\widetilde{O}(\log^{w+1} n)$ for unknown-order branching programs of constant-width $w$.
Central to each of these results is a bound on a certain key quantity: the level-$k$ Fourier mass of a branching program (see \autoref{prelim}).
In each work, a bound on this quantity is established for the class of branching programs under consideration, and then this bound is used to deduce the result.

Although we also employ a Fourier analytic approach, a major contrast between our techniques and this line of work is that in general we have no need of any nontrivial bound on the Fourier mass of branching programs.
However, we can still make use of one to replace otherwise naive bounds on Fourier mass in our argument.
By incorporating the level-$k$ Fourier mass bound for constant-width branching programs derived in \cite{chrt17} into our approach, we get the following improvement on \autoref{main1} in the constant-width case.

\begin{theorem}\label{main2}
	There exists an explicit $\nicefrac{1}{\poly(n)}$-error pseudorandom generator $G : \bits^{\Otilde( \log^2 n)} \rightarrow \bits^n$ for the class of functions computable by a $O(1)$-width read-once (oblivious) branching program in some variable order.
\end{theorem}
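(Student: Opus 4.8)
The plan is to follow exactly the same ``bounded independence plus noise'' framework used to prove \autoref{main1}, but to replace the step in the analysis of a single round that counts/bounds the contribution of medium-degree Fourier coefficients. Recall from the techniques overview that the core lemma (\autoref{prooflem2}) shows that one round of $D_k \oplus N$ — where $D_k$ is a $k$-wise independent distribution and $N$ is a pseudorandom noise distribution — fools a width-$w$ unknown-order roBP $f$ up to some error, and that this error is controlled via the grouping in \autoref{prop1} together with a union bound over the medium-degree coefficients in each group. The number of rounds is $\approx \log n$, and the seed-length of \autoref{main1} is $O(\log^3 n)$ because each round costs roughly $O(\log^2 n)$ random bits, with one factor of $\log n$ coming from a crude upper bound on the relevant Fourier mass that forces $k$ to be taken polylogarithmically large.

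First I would isolate the place in the proof of the single-round lemma where the level-$k$ Fourier mass of the roBP enters. In the general-width analysis this quantity is bounded only by the trivial estimate (something like $\binom{n}{k}^{1/2}$ or $n^{O(k)}$ after Cauchy--Schwarz), which is what forces the wide choice of parameters. Next I would invoke the level-$k$ Fourier mass bound of Chattopadhyay, Hatami, Reingold, and Tal \cite{chrt17} for constant-width roBPs: for width $w = O(1)$ the level-$k$ mass is at most $(O(\log n))^{O(k)}$ (indeed roughly $(\log n)^{(w-1)k}$ up to constants), which is quasipolynomial rather than polynomial in $n^k$. Substituting this improved bound into the single-round error estimate, the required independence parameter $k$ and the number of truly-random bits injected by the noise distribution each drop from $\mathrm{poly}(\log n)$ to $\Otilde(\log n)$, so one round now costs $\Otilde(\log n)$ bits.

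Then I would re-run the recursion: as in \autoref{main1}, after each round the number of coordinates carrying truly random values shrinks from $m$ to $\approx m/2$, so after $O(\log n)$ rounds no fresh randomness is needed; the total seed-length is the per-round cost times the number of rounds, namely $\Otilde(\log n)\cdot O(\log n) = \Otilde(\log^2 n)$. I would also need to check that the accumulated error over the $O(\log n)$ rounds stays $1/\poly(n)$ — this is a routine geometric/union-bound argument identical in form to the one in the proof of \autoref{main1}, just with the better per-round error — and that the generator remains explicit (logspace-computable), which follows because the \cite{chrt17} mass bound is used only in the analysis and the construction is unchanged. Importantly, the grouping of high-degree coefficients in \autoref{prop1} and the ``bounded communication'' argument that discards variables read after the $i$-th one are order-oblivious in the construction, so nothing about the unknown-order setting breaks.

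The main obstacle I anticipate is making the parameter bookkeeping in the single-round lemma genuinely modular: the proof of \autoref{prooflem2} likely interleaves the Fourier-mass estimate with the choice of $k$, the noise rate, and the degree threshold separating ``medium'' from ``high,'' so I would want to first restate that lemma with the level-$k$ Fourier mass appearing as an explicit parameter $M_k$, prove the general version, and only then plug in $M_k \le (O(\log n))^{O(k)}$ for constant width. A secondary subtlety is that the \cite{chrt17} bound is naturally stated in a fixed order (or order-obliviously, but one should confirm which), and since our $f$ is an unknown-order roBP I must make sure the mass bound is applied to $f$ as an actual roBP under its own native order — which is fine, because the level-$k$ Fourier mass of a Boolean function is a permutation-invariant quantity (it depends only on the multiset of $|\hat f(S)|$ over $|S|=k$), so the bound transfers verbatim. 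Once those two points are handled, the rest is a direct recalculation.
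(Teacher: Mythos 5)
Your high-level instinct --- plug in the Chattopadhyay--Hatami--Reingold--Tal level-$k$ Fourier mass bound to save a $\log n$ factor in the constant-width case --- is correct, but the mechanism you describe for where that bound enters is not what the paper does, and in fact your plan as stated would fail if applied literally to the first construction.

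In the analysis of $G_r$ (the construction behind \autoref{main1}), the Fourier mass $\mathcal{L}_k$ never appears. The single-round error in \autoref{prooflem2} is $nw/2^{k/2}$, obtained by applying Parseval to each $F^{\leq i}$; there is no ``count of medium-degree coefficients,'' no quantity you could modularly replace by $\mathcal{L}(n,w;k)$. Moreover $k$ is already $O(\log n)$ there --- not $\poly(\log n)$ --- so there is no ``crude Fourier mass bound forcing $k$ to be polylogarithmically large.'' The $O(\log^2 n)$ cost per round in $G_r$ comes entirely from the price $O(k\log n)$ of exact $k$-wise independence, and no amount of sharpening the Fourier mass estimate changes that.

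The missing idea is that \autoref{main2} uses a genuinely different construction, $G_r^*$: the exactly $2k$-wise independent $D_i$ are replaced by $\delta$-biased distributions, and the exactly $k$-wise independent $T_i$ are replaced by $\gamma$-almost $k$-wise independent distributions, which cost only $O(\log n+\log 1/\delta)$ and $O(k+\log\log n+\log 1/\gamma)$ bits respectively --- potentially $\Otilde(\log n)$ per round. The catch is that these weaker primitives no longer kill the low-degree term exactly and no longer make the cross terms vanish in the Parseval computation; the residuals are bounded by $\delta\cdot\mathcal{L}(n,w;k)$ and $\delta\cdot\mathcal{L}_k(F^{\leq i})^2$, and this is precisely where the Fourier mass enters. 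To keep this error $\le \varepsilon$ one needs $\delta \lesssim \varepsilon/\mathcal{L}(n,w;k)$, i.e.\ $\log 1/\delta\approx\log\mathcal{L}(n,w;k)$. With the trivial bound $\mathcal{L}(n,w;k)=n^{O(k)}$ this forces $\log 1/\delta=O(k\log n)$, giving no savings over $G_r$; with the \cite{chrt17} bound $\mathcal{L}_k(F)\le O(\log n)^{wk}$ one gets $\log 1/\delta = O(wk\log\log n)=\Otilde(\log n)$ for constant $w$, and the seed length becomes $\Otilde(\log^2 n)$. There is also a secondary modification you'd have to supply: in $G_r^*$ the base of the recursion is a $320k$-wise independent distribution (not the all-ones string), and the Markov tail bound on the surviving coordinates is upgraded to a Chernoff bound for almost $k$-wise independent indicators, which lets the paper take only $r=\lceil\lg n\rceil$ rounds while still achieving polynomially small error.

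So: without replacing the pseudorandom primitives, there is nowhere to plug the CHRT bound in. Your proposal needs the new construction $G_r^*$ and a re-derivation of the single-round lemma that exposes the $\delta\cdot\mathcal{L}$ error terms before the mass bound can do any work.
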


Thus in the constant-width case, we nearly recover the $O( \log^2 n)$ seed-length of Nisan's generator for the more challenging model of unknown-order branching programs.

\section{Preliminaries} \label{prelim}

Here we describe a convenient algebraic encoding of a branching program as a product of one-bit matrix-valued functions.
Recall that a branching program of width $w$ is a $w$-state finite automaton where the transition map is allowed to depend on the number of bits read so far.
Let us encode the $w$ states as the set of standard basis vectors in $\R^{w}$. 
Then, the transition map corresponding to the $i$-th input bit $x_i$ can be encoded by a pair of transition matrices $A_{i,0}, A_{i,1} \in \R^{w \times w}$, defined so that $A_{i,x_i}$ applied to the current state produces the appropriate successor state.
Define the one-bit matrix-valued functions $F_i(x_i) = A_{i,x_i}$. 
With this notation in place, the value of a branching program $f$ on an input $x \in \bits^n$ is given by the $(1,1)$ entry of the product
\[
	F(x) \eqdef F_1(x_1) F_2(x_2) \cdots F_n(x_n)
	\;.
\]
This entry indicates whether the string $x$ defines a path through the program that takes the start state to the accepting state.

Let $U$ denote the uniform distribution over $\bits^n$ and let $X$ be an arbitrary distribution over $\bits^n$.
To show that a branching program $f$ is $\varepsilon$-fooled by $X$, it suffices to bound the error $\abs{ \E_X f(X) - \E_U f(U) } = \abs{ \E_X F(X)_{1,1} - \E_U F(U)_{1,1} }$ by $\varepsilon$.
However, it will be more convenient to simply bound the Frobenius norm of the entire error matrix 
\[
	\E_X F(X) - \E_U F(U) 
	\;,
\]
by $\varepsilon$.
Recall that the Frobenius norm of a matrix is defined by 
\[
	\norm{M} \eqdef \sqrt{\tr\paren{ M^\top M}} = \sqrt{ \sum_{i,j} M_{i,j}^2 }
	\;,
\]
so clearly such a bound is also sufficient.
In this paper, $\norm{\cdot}$ will always denote the Frobenius norm.

\subsection{Fourier Analysis}

For every vector $\alpha \in \F_2^n$, define the associated Fourier character $\ch : \F_2^n \rightarrow \R$ via
$$ \ch(x) = (-1)^{\ip{\alpha}{x}} .$$
We say that $\ch$ is a degree-$k$ Fourier coefficient if $\abs{\alpha} = k$, where $\abs{\alpha}$ denotes the hamming weight of $\alpha$.
Any function $F : \F_2^n \rightarrow \R^{w \times w}$ can be expanded in the basis of Fourier characters, with coefficients from the ring $\R^{w \times w}$.
The Fourier expansion of $F$ is 
$$ F(x) = \sum_{\alpha \in \F_2^n} \hat{F}_{\alpha} \ch(x),$$
where the Fourier coefficients $\hat{F}_{\alpha}$ are given by
$$ \hat{F}_\alpha \eqdef \underset{x \in \F_2^n}\E F(x) \ch(x). $$
This identity can easily be checked with the aid of a few useful properties of Fourier characters; namely
$$ \ch(x) \chb(x)  = \chi_{\alpha + \beta}(x) $$
and 
\[
	\underset{x \in \F_2^n}\E \ch(x) = 
		\begin{cases}
			1	&\text{ if } \alpha = 0 \\
			0	&\text{ otherwise}.
		\end{cases}
\]

Aside from this, we will only need a few simple facts about Fourier analysis. 
Firstly, we note that the expectation of a function $F$ under the uniform distribution is conveniently encoded by its $0$-th degree Fourier coefficient:
$$ \E_U F(U) = \sum_{\alpha \in \F_2^n} \hat{F}_{\alpha} \E_U \ch(U) = \hat{F}_0 .$$
Next, we will need Parseval's identity to get a bound on the sum of squares of Fourier coefficients.

\begin{proposition}[Parseval for matrix-valued functions]
	\[
		\sum_{\alpha \in \F_2^n} \norm{\hat{F}_\alpha}^2 = \underset{x \in \F_2^n}\E \norm{ F(x) }^2
		\;.
	\]
\end{proposition}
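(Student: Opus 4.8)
The plan is to prove Parseval's identity for matrix-valued functions by reducing it to the scalar case, applied entry-by-entry. The key observation is that the Frobenius norm decomposes as a sum over matrix entries, $\norm{M}^2 = \sum_{i,j} M_{i,j}^2$, and that the Fourier transform commutes with taking matrix entries: the $(i,j)$ entry of $\hat{F}_\alpha$ is exactly the (scalar) Fourier coefficient $\widehat{(F_{i,j})}_\alpha$ of the scalar-valued function $x \mapsto F(x)_{i,j}$. So the whole statement follows by summing the scalar Parseval identity over the $w^2$ choices of $(i,j)$.

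**First I would** establish (or recall) the scalar Parseval identity: for any $g : \F_2^n \to \R$ with Fourier expansion $g(x) = \sum_\alpha \hat{g}_\alpha \ch(x)$, one has $\sum_\alpha \hat{g}_\alpha^2 = \E_x g(x)^2$. This is immediate from the two bullet-point facts already recorded in the excerpt: expand $g(x)^2 = \sum_{\alpha,\beta} \hat{g}_\alpha \hat{g}_\beta \ch(x)\chb(x) = \sum_{\alpha,\beta} \hat{g}_\alpha \hat{g}_\beta \chi_{\alpha+\beta}(x)$, take expectation over $x$, and use that $\E_x \chi_{\alpha+\beta}(x)$ is $1$ when $\beta = \alpha$ and $0$ otherwise, collapsing the double sum to $\sum_\alpha \hat{g}_\alpha^2$.

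**Then I would** apply this with $g(x) = F(x)_{i,j}$ for each fixed pair $(i,j)$. Since $\hat{F}_\alpha = \E_x F(x)\ch(x)$ and taking the $(i,j)$ entry is linear, $(\hat{F}_\alpha)_{i,j} = \E_x F(x)_{i,j}\ch(x) = \widehat{(F_{i,j})}_\alpha$, so scalar Parseval gives $\sum_\alpha (\hat{F}_\alpha)_{i,j}^2 = \E_x F(x)_{i,j}^2$. Summing over all $i,j$ and swapping the (finite) order of summation on the left yields $\sum_\alpha \sum_{i,j} (\hat{F}_\alpha)_{i,j}^2 = \E_x \sum_{i,j} F(x)_{i,j}^2$, which is exactly $\sum_\alpha \norm{\hat{F}_\alpha}^2 = \E_x \norm{F(x)}^2$ by the definition of the Frobenius norm.

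**I do not expect any real obstacle here** — this is a routine bookkeeping argument. The only thing to be slightly careful about is the interchange of the sum over $\alpha$ with the sum over matrix entries, but since the entry-sum is finite this is trivially justified (and the sum over $\alpha$ is likewise finite). If one prefers a coordinate-free phrasing, the same proof goes through verbatim with $\norm{M}^2 = \tr(M^\top M)$ in place of the entrywise formula, using linearity of trace and expectation; I would mention this but carry out the entrywise version since it is the most transparent.
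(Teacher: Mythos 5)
Your proof is correct, and it is essentially the same argument the paper gives: both expand $\norm{F(x)}^2$ into a double sum over Fourier indices, take the expectation over $x$, and collapse the double sum using $\E_x \chi_{\alpha+\beta}(x)=\ind{\alpha=\beta}$. The only cosmetic difference is that you organize the computation entry-by-entry (reducing to scalar Parseval for each $(i,j)$), while the paper carries out the identical algebra directly in terms of the Frobenius inner product $\ip{M}{N}=\tr(M^\top N)$, a variant you yourself note is equivalent.
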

\begin{proof}
	Let $\ip{\cdot}{\cdot}$ denote the Frobenius matrix inner product, which is defined by
	$$ \ip{M}{N} = \tr \! \paren{M^\top N} = \sum_{i,j} M_{i,j} N_{i,j} . $$
	First note that
	\begin{align*}
		\norm{F(x)}^2
		&= \ip{F(x)}{ F(x) } \\
		&= \ip{ \sum_{\alpha \in \F_2^n} \hat{F}_{\alpha} \ch(x)}{ \sum_{\beta \in \F_2^n} \hat{F}_{\beta} \chb(x) } \\
		&= \sum_{\alpha \in \F_2^n} \sum_{\beta \in \F_2^n} \ip{ \hat{F}_{\alpha}}{ \hat{F}_{\beta}} \chi_{\alpha + \beta}(x)
		\;.
	\end{align*}
	Thus
	\begin{align*}
		\E_{x \in \F_2^n} \norm{ F(x) }^2 
		&=\sum_{\alpha \in \F_2^n} \sum_{\beta \in \F_2^n} \ip{ \hat{F}_{\alpha}}{ \hat{F}_{\beta}} \underset{x \in \F_2^n}\E \chi_{\alpha + \beta}(x) \\
		&=\sum_{\alpha \in \F_2^n} \ip{ \hat{F}_{\alpha}}{ \hat{F}_{\alpha}} \\
		&=\sum_{\alpha \in \F_2^n} \norm{ \hat{F}_{\alpha}}^2
		\;.
		\qedhere
	\end{align*}
\end{proof}

We remark that if $F$ is a branching program, then upon any input $x$, $F(x)$ is equal to some transition matrix that has exactly $w$ entries of value $1$ and its remaining entries are all zeros. 
Thus for branching programs we have $\norm{F(x)}^2 = w$ for any $x$, and the above identity gives $\sum_{\alpha \in \F_2^n} \norm{\hat{F}_{\alpha}}^2 = w$.

Finally, we define $\mathcal{L}_k(F)$, the level-$k$ Fourier mass of a function, as in \cite{rsv13}:
\[
	\mathcal{L}_k(F) \eqdef  \sum_{\substack{\alpha \in \F_2^n \\ |\alpha| = k}} \norm{ \hat{F}_{\alpha} }
	\;.
\]
Recalling that $\norm{ \hat{F}_{\alpha} } = \norm{ \E_x F(x) \ch(x)} \leq \E_x \norm{  F(x) } = w^{1/2}$, note that we have the trivial bound 
\[
	\mathcal{L}_k(F) \leq \binom{n}{k} w^{1/2}
	\;,
\]
for branching programs.\footnote{Actually, one can easily derive the slightly better bound $\mathcal{L}_k(F) \leq \binom{n}{k}^{1/2} w^{1/2}$ by first applying Cauchy-Schwarz followed by Parseval's identity.}

Chattopadhyay, Hatami, Reingold, and Tal~\cite{chrt17} derive the following bound on the level-$k$ Fourier mass of a branching program which significantly improves upon the trivial bound in the case of small $w$.

\begin{theoremwp}[Chattopadhyay, Hatami, Reingold, and Tal~\cite{chrt17}]
	Suppose $F : \F^2_n \rightarrow \R^{w \times w}$ encodes a branching program of width $w$. Then
	\[
		\mathcal{L}_k(F) \leq O( \log n)^{wk}
		\;.
		\qedhere
	\]
\end{theoremwp}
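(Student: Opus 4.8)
The plan is to bound the level-$k$ Fourier mass $\mathcal{L}_k(F)$ by a telescoping-type argument that exploits the read-once structure of $F = F_1(x_1)\cdots F_n(x_n)$: each Fourier coefficient $\hat F_\alpha$ factors as a product over the coordinates, so a large support $\alpha$ with $|\alpha|=k$ forces the ``weight'' $k$ to be distributed among the layers, and one can push the bound layer-by-layer using the fact that each individual one-bit function $F_i$ is $\{A_{i,0},A_{i,1}\}$-valued. Concretely, for $\alpha\in\F_2^n$ we have $\hat F_\alpha = \prod_{i=1}^n \widehat{(F_i)}_{\alpha_i}$ where each factor is either $\frac{A_{i,0}+A_{i,1}}{2}$ (if $\alpha_i=0$) or $\frac{A_{i,0}-A_{i,1}}{2}$ (if $\alpha_i=1$); the first kind has operator norm at most $1$ (it is an average of two $0/1$ stochastic-type matrices), while the second kind is the one that ``costs'' us. The key quantitative input is that for a fixed layer, the $\log w$-type bound on how much a single transition can spread mass lets us charge each of the $k$ ``active'' layers a factor of $O(\log n)^w$, rather than the trivial $\sqrt n$-type factor one would get from Cauchy–Schwarz.

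First I would set up the layer-wise decomposition and record the submultiplicativity of the Frobenius norm under products together with the bound $\|\widehat{(F_i)}_0\| , \|\widehat{(F_i)}_1\| \le O(1)$ coming from the $0/1$ structure of the transition matrices. Next I would partition according to which $k$ coordinates lie in the support of $\alpha$; naively summing $\prod$ of norms over all $\binom nk$ choices only recovers the trivial bound, so the real work is to group the coefficients more cleverly. The natural device is an iterated/recursive argument: split the program into two halves at the middle layer, observe $\mathcal{L}_k(F) \le \sum_{j=0}^k \mathcal{L}_j(F^{\mathrm{left}})\cdot \mathcal{L}_{k-j}(F^{\mathrm{right}})$ by factoring $\hat F_\alpha$ through the middle and applying the triangle inequality, and then control $\mathcal{L}_0$ of a sub-program (which is just $\|\hat F_0\|=\|\E F\|\le \sqrt w$ or better) separately from the higher levels. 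Unrolling this recursion over the $\log n$ levels of the halving, with a per-level branching factor governed by the width $w$, yields the claimed $O(\log n)^{wk}$.

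The main obstacle I anticipate is getting the recursion to close with only a $(\log n)^{O(w)}$ — as opposed to $n^{O(w)}$ or $(\log n)^{O(wk)}$ with a bad constant — at each of the $k$ active coordinates. The delicate point is that the convolution-type inequality $\mathcal{L}_k(F) \le \sum_j \mathcal{L}_j(F^{\mathrm{left}})\mathcal{L}_{k-j}(F^{\mathrm{right}})$ introduces a factor of $(k+1)$ (the number of terms) at every level of the recursion, and naively one also loses a $\|\hat F_0\| \le \sqrt w$ factor per sub-program; multiplied over $\log n$ levels and $n/2^{\text{(level)}}$ sub-programs these could blow up. The resolution should be to observe that the $\mathcal{L}_0$ factors are not just $\sqrt w$ but can be taken to be $1$ for the sub-program containing the start state (it is genuinely (sub)stochastic), or more carefully to track a normalized mass; and to absorb the $(k+1)$ factors into the $O(\log n)$ base by noting that at most $k$ of the $\log n$ recursion levels can actually branch (a level contributes a genuine split only if both halves have nonzero Fourier weight, which happens for at most $k$ levels along any root-to-leaf path in the recursion tree). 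Making this counting precise — i.e., that the recursion tree has at most $\binom{\log n}{k}\cdot(\text{small})^k$ relevant leaves, each contributing $(\log n)^{O(w)\cdot k}$ — is where the proof will need the most care, and it is essentially the place where the specific statement $O(\log n)^{wk}$ (rather than a weaker poly-log-to-the-$wk$ bound) is pinned down; here I would defer to the argument of Chattopadhyay, Hatami, Reingold, and Tal~\cite{chrt17} for the sharp constants.
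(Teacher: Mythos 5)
This statement appears in the paper as a cited theorem imported as a black box from Chattopadhyay, Hatami, Reingold, and Tal~\cite{chrt17}; the paper does not prove it, so there is no in-paper argument to compare against. Assessing your sketch on its own terms, it has a genuine gap.

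Your binary-halving recursion with the convolution inequality $\mathcal{L}_k(F) \leq \sum_{j=0}^k \mathcal{L}_j(F^{\mathrm{left}})\,\mathcal{L}_{k-j}(F^{\mathrm{right}})$ is correct but also toothless: unrolled all the way down to single-bit factors it is literally the identity $\mathcal{L}_k(F) \le \sum_{|\alpha|=k}\prod_i \|\widehat{(F_i)}_{\alpha_i}\|$, which has $\binom{n}{k}$ summands and recovers only the trivial bound. The observation that ``only $k$ of the $\log n$ levels branch'' controls the number of branching nodes along a \emph{single} root-to-leaf path of the recursion tree, but the tree still has $n$ leaves, and the sum still ranges over all $\binom{n}{k}$ placements of the $k$ active coordinates; the $\binom{\log n}{k}\cdot(\text{small})^k$ leaf count you posit is not correct. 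More fundamentally, nothing in the halving scheme explains why the width $w$ should appear in the \emph{exponent}: the convolution step is entirely agnostic to $w$, and the $\sqrt{w}$-per-subprogram losses you flag from $\|\hat F_0\|$ are multiplicative in the wrong way and cannot manufacture a factor $wk$ in the exponent. (Also, your claimed $O(1)$ bound on the one-bit Fourier coefficients fails in the Frobenius norm the paper uses; one gets $\|\widehat{(F_i)}_b\|\le\sqrt{w}$, and even the operator-norm bound of $1$ requires first passing to a regular program.)

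The actual argument in~\cite{chrt17} has a different shape: an induction on the \emph{width} $w$, not a halving of the number of coordinates. One first reduces to regular branching programs, so that each averaged transition $\tfrac{1}{2}(A_{i,0}+A_{i,1})$ is doubly stochastic and has operator norm $1$ (this is the normalization you were reaching for, but it requires the regularity reduction and the operator rather than Frobenius norm). One then locates layers where the effective width drops (a ``collision''), and between such layers the program behaves like one of width at most $w-1$; peeling off one unit of width costs roughly a $(\log n)^{O(k)}$ factor, and iterating this $w$ times produces the $(\log n)^{O(wk)}$ bound. This width-reduction recursion is the mechanism behind the $wk$ exponent and has no analogue in your sketch; to reproduce the theorem you would need to abandon the midpoint-halving idea in favor of it, rather than deferring the ``sharp constants'' to~\cite{chrt17} --- the gap is structural, not quantitative.
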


We also define the level-$k$ Fourier complexity of width-$w$ branching programs,
\[
	\mathcal{L}(n,w;k) \eqdef \max_{F} \sum_{i=1}^k \mathcal{L}_i(F)
	\;,
\]
where the maximum is taken over all functions $F : \F_2^n \to \R^{w \times w}$ that encode a branching program. 

\subsection{Pseudorandom Primitives}

In this section, we collect some basic pseudorandom distributions over $\F_2^n$ that will serve as building blocks for our generator.
It is important that the defining property of each of these distributions remains unaffected by a re-ordering of bit positions.
The fact that we build our generator out of permutation-invariant components is what allows it to fool branching programs that read their input in any order.

First we introduce $\delta$-biased distributions, which fool linear functions over $\F_2^n$, i.e.\ Fourier characters.
\begin{definition} Let $D$ be a distribution over $\F_2^n$. We say $D$ is $\delta$-biased if, for every nonzero $\alpha \in \N_2^n$, we have
	\[
		\abs{ \E_D \ch(D) } \leq \delta
		\;.
		\qedhere
	\]
\end{definition}
It is possible to sample from a $\delta$-biased distribution using $O(\log n + \log\nicefrac{1}{\delta})$ random bits (\cite{nn93,aghp92}).

Next we have $k$-wise independent and $\gamma$-almost $k$-wise independent distributions, which look locally uniform and thus fool functions that only depend on a few bits.

\begin{definition} Let $D$ be a distribution over $\F_2^n$. We say $D$ is $k$-wise independent if, for every $f : \F_2 \rightarrow [-1, 1]$ that depends on at most $k$ bits, we have
$$ \E_D f(D) = \E_U f(U). $$
If $D$ merely satisfies 
$$ \abs{ \E_D f(D) - \E_U f(U) } \leq \gamma $$
for every such $f$, we say that $D$ is $\gamma$-almost $k$-wise independent.
\end{definition}
It is possible to sample from a $k$-wise independent distribution using $O(k \cdot \log n)$ random bits (\cite{v12}) and from a $\gamma$-almost $k$-wise independent distribution using $O(k + \log \log n + \log 1/\gamma)$ random bits (\cite{nn93,aghp92}).
We remark that Fourier characters $\ch(x)$ only depend on $|\alpha|$ bits of $x$, so these distributions also fool low-degree Fourier characters. 

\section{The Generator}

We adopt our construction from the ``bounded independence plus noise'' framework developed by Haramaty, Lee, and Viola in \cite{hlv17,lv17}.
In fact this framework is essentially equivalent to the ``mild pseudorandom restriction'' framework developed by Gopalan, Meka, Reingold, Trevisan and Vadhan~\cite{gmrtv12} and subsequently employed by various authors (e.g., \cite{rsv13,svw14,chrt17}), but we find the bounded independence plus noise perspective more convenient to work with.

We actually give two slightly different constructions. The first construction only uses $k$-wise independence as a core pseudorandom primitive (along with appropriate recursion), and suffices for proving \autoref{main1}. The second construction replaces the use of \emph{exact} $k$-wise independence with the use small-bias spaces and \emph{almost} $k$-wise independence.  This allows the error analysis to be more general (but also slightly more involved due to additional parameters).  In particular, this second construction also suffices for handing general roBPs (\autoref{main1}), but now the added generality can be tuned to achieve a better seed-length for constant-width roBPS, as required to prove \autoref{main2}.

We proceed to give the first construction.

Let $D_1, D_2, \ldots, D_r$ denote $r$ independent copies of a $2k$-wise independent distribution and let $T_1, T_2, \ldots, T_r$ denote $r$ independent copies of a $k$-wise independent distribution over $\F_2^n$.
We then define the distribution $G_r$ recursively as follows. 
Let $G_0$ be equal to the all-ones string in $\F_2^n$, and set
$$ G_{i+1} \eqdef D_i + T_i \land G_i,$$
where $\land$ denotes bitwise AND and $+$ denotes addition over $\F_2^n$ (i.e.\ bitwise XOR).

\begin{lemma} \label{mainlem1}
Suppose $F : \F_2^n \rightarrow \R^{w \times w}$ encodes a branching program. Then $G_r$, with parameters $k = \lceil 5  \lg n + 2 \lg w \rceil$ and $r =  \lceil 2 \lg n + \frac{1}{2} \lg w \rceil$, fools $F$ with error
$$ \varepsilon = \norm{ \E_{G_r} F(G_r) - \E_U F(U) } \leq O\paren{\tfrac{1}{n}} .$$
\end{lemma}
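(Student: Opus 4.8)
The plan is to analyze a single step of the recursion $G_{i+1} = D_i + T_i \land G_i$ and show it multiplies the error by a constant factor (or better), so that after $r = O(\log n)$ rounds the error shrinks from a trivial initial bound down to $O(1/n)$. The key conceptual point, which I expect to be packaged as a separate lemma (the ``one step'' lemma referenced as \autoref{prooflem2} in the introduction), is the following: if $Z$ is any distribution over $\F_2^n$, $D$ is $2k$-wise independent, and $T$ is $k$-wise independent (all independent of each other and of $Z$), then
\[
	\norm{ \E\, F(D + T \land Z) - \E_U F(U) } \le c \cdot \norm{ \E\, F(Z) - \E_U F(U) } + (\text{small additive term}),
\]
for some absolute constant $c < 1$ (or at worst $c$ a constant, with the additive term being the real source of decay). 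Granting such a step lemma, the proof of \autoref{mainlem1} is just an induction: $G_0$ is the all-ones string (so $T_0 \land G_0 = T_0$, meaning $G_1 = D_0 + T_0$ is already genuinely $k$-wise-independent-plus-noise applied to the constant input), the base error $\norm{\E F(G_0) - \E_U F(U)}$ is bounded crudely by $2w^{1/2}$ via the triangle inequality and $\norm{F(x)} = w^{1/2}$, and unrolling the recursion $r$ times gives error $\le c^r \cdot 2 w^{1/2} + O(r \cdot \text{additive})$; plugging $k = \lceil 5\lg n + 2 \lg w\rceil$ and $r = \lceil 2\lg n + \tfrac12 \lg w\rceil$ makes both terms $O(1/n)$.

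First I would set up the Fourier-analytic accounting. Write the error matrix as $\sum_{\alpha \ne 0} \hat F_\alpha \cdot \E[\ch(D + T\land Z)]$ and split the sum at a degree threshold. For \emph{low-degree} $\alpha$ (say $|\alpha| \le 2k$), the character $\ch$ depends on at most $2k$ bits, and since $D$ is $2k$-wise independent and added in, $\E[\ch(D + T\land Z)] = \E_D[\ch(D)] \cdot \E[\ch(T \land Z)] = 0$ exactly when $\alpha \ne 0$; so these contribute nothing. This is the role of the ``bounded independence'' piece. For \emph{high-degree} $\alpha$ (with $|\alpha| > 2k$), one uses the $T$-noise: conditioned on $T$, the surviving coordinates are those $i$ with $\alpha_i = 1$ and $T_i = 1$, and because $T$ is $k$-wise independent and $|\alpha|$ is large, with overwhelming probability (in $T$) at least one coordinate of $\alpha$ survives — but actually the cleaner route, and the one matching the technique discussion, is the grouping by ``the first variable at which $\alpha$ becomes high-degree'' (\autoref{prop1}) combined with the roBP communication bound, rather than a raw union bound over all high-degree characters.

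The main obstacle — and the genuine content of the paper — is exactly this high-degree step: there are up to $2^n$ high-degree characters, so a naive union bound over $\norm{\hat F_\alpha} \le w^{1/2}$ times a per-character decay is hopeless. The resolution, as sketched in the introduction, is to partition the high-degree $\alpha$'s into $n$ groups indexed by the position $i$ (in the branching program's read order) at which the prefix of $\alpha$ first exceeds a medium-degree threshold; within group $i$, the suffix after position $i$ can be ``forgotten'' using that $F$ is read-once (a bounded-communication / norm-submultiplicativity argument on the factorization $F = F_{\le i} \cdot F_{>i}$), leaving only medium-degree characters whose count is at most $\binom{n}{O(k)}$ — small enough for a union bound, while still high enough degree that the $k$-wise independent noise $T$ damps each one by roughly $2^{-k}$. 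Choosing $k \approx 5\lg n + 2\lg w$ ensures $\binom{n}{O(k)} \cdot 2^{-k} \cdot w^{1/2}$ and the accumulated group terms are all $O(1/n)$. So the structure I would write is: (1) state and prove the one-step lemma via this low/high/grouping split; (2) note $G_0$ gives trivial initial error $O(w^{1/2})$; (3) induct over $r$ rounds; (4) substitute the stated $k, r$ and verify the arithmetic gives $O(1/n)$, absorbing the $r$-fold sum of additive terms. I would expect step (1), and specifically the read-once ``forget the suffix'' estimate inside the grouping, to be where essentially all the difficulty lies.
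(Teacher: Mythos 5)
Your proposal is built around a hypothetical one-step lemma of the form
\[
	\norm{\E F(D + T \land Z) - \E_U F(U)} \le c\cdot\norm{\E F(Z) - \E_U F(U)} + (\text{additive}),\quad c<1,
\]
for an \emph{arbitrary} distribution $Z$, and then you conclude by unrolling to $c^r \cdot 2w^{1/2} + O(r\cdot\text{additive})$. This multiplicative contraction is not what the paper proves and is not obviously true. The paper's one-step lemma (\autoref{prooflem2}) is purely additive and applies only when the ``inner'' distribution is exactly uniform: $\norm{\E_{D,T,U} F(D+T\land U) - \E_U F(U)} \le nw/2^{k/2}$. The paper's recursive decomposition (add and subtract $F(D_i + T_i\land U)$) therefore yields, per level, an additive $nw/2^{k/2}$ plus a residual term $\E_{D_i,T_i}\norm{\E_{G_{i-1}} F'(G_{i-1}) - \E_U F'(U)}$ where $F'(x) := F(d+t\land x)$ is again a branching program of the same width. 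Unrolled naively this gives $r\cdot nw/2^{k/2} + \E\norm{\E_{G_0} f(G_0) - \E_U f(U)}$, and the last term is a priori still as large as $2w^{1/2}$ — there is no $c<1$ contraction per step. If $c$ were merely $1$, as you allow, your bound $c^r \cdot 2w^{1/2}$ would simply be $2w^{1/2}$ and the proof would fail.

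The missing ingredient is the paper's handling of the base case, which you do not mention at all. Fixing $D_1,\dots,D_r$ and $T_1,\dots,T_r$, the residual function $f(x) = F(g_r)$ with $g_i = D_i + T_i\land g_{i-1}$, $g_0 = x$, depends only on the coordinates $j$ with $Y_j := \bigwedge_{i=1}^r T_i[j] = 1$. Each $Y_j$ has $\Pr[Y_j = 1] = 2^{-r}$, so by Markov $\Pr[\sum_j Y_j \ge 1] \le n\cdot 2^{-r}$; on the complementary event $f$ is constant and its error is zero. This gives $\E_{\{D_i,T_i\}}\norm{\E_{G_0} f(G_0) - \E_U f(U)} \le 2nw^{1/2}/2^r$, which, combined with the $r$-fold additive term, produces the bound $r\cdot nw/2^{k/2} + 2nw^{1/2}/2^r \le 3/n$ with the stated parameters. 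The $2^{-r}$ decay you intuit is real, but it is a decay in the \emph{probability that the cumulative noise mask is nonempty}, not a per-round multiplicative contraction of the Fourier error. Your sketch of the low-degree / high-degree grouping inside \autoref{prooflem2}, including the use of $F = F^{\le i}\cdot F^{> i}$ and Parseval for the suffix, does match the paper; the gap is entirely in the recursive accounting and the missing Markov step at the base.
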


This proves \autoref{main1}, since in the case of width $w = n^{O(1)}$ the price of sampling such a distribution is $O(r \cdot k \cdot \log n) = O(\log^3 n)$ random bits.

We now define the second construction, $G_r^*$. 
This time, let $D_1, D_2, \ldots, D_r$ denote $r$ independent $\delta$-biased distributions, and let $T_1, T_2, \ldots, T_r$ denote $r$ independent $\gamma$-almost $k$-wise independent distributions.
We set $G_0^*$ equal to a $320k$-wise independent distribution, and again we define
$$ G_{i+1}^* \eqdef D_i + T_i \land G_i^*. $$

\begin{lemma} \label{mainlem2}
Suppose $F : \F_2^n \rightarrow \R^{w \times w}$ encodes a branching program.
Then $G_{r}^*$, with depth $r = \lceil \lg n \rceil$, fools $F$ with error
$$ \varepsilon = \norm{ \E_{G_r^*} F(G_r^*) - \E_U F(U) } \leq O\paren{ \left( \sqrt{\delta} \mathcal{L}(n,w;k) + \left(\tfrac{1}{2}\right)^{k/2} + \sqrt{\gamma} + \gamma 4^k \right) \cdot nwr} .$$
\end{lemma}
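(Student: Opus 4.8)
The plan is to reduce \autoref{mainlem2} to the single-step bound of \autoref{prooflem2} --- which controls one application of the basic operation $H \mapsto D + T \land H$ --- and then iterate it through the $r = \lceil \lg n \rceil$ levels of the recursion $G_{i+1}^* = D_i + T_i \land G_i^*$. Each level roughly halves the set of coordinates on which the inner distribution is still consulted, since $\operatorname{supp}(T_i)$ has size $\approx n/2$; after $r$ levels the iterated AND leaves only a handful of coordinates of the base distribution $G_0^*$ visible, and as $G_0^*$ is $320k$-wise independent it fools any branching program on that few variables with zero error, so the base case of the iteration is trivial. The error is then a sum of $r$ per-step errors, each carrying the $n$, the $w$, and the $\paren{\sqrt{\delta},\ (1/2)^{k/2},\ \sqrt{\gamma},\ \gamma 4^k}$ dependence; a triangle inequality over the levels produces the overall factor of $r$.

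For the single step I would pass to the Fourier side. Expanding $F = \sum_\alpha \hat F_\alpha \ch$ and using $\E_U F(U) = \hat F_0$, the step error is $\sum_{\alpha \neq 0} \hat F_\alpha \, \E[\ch(D + T \land H)]$; since $D$ is independent of $(T,H)$ and $\langle \alpha, T \land H\rangle = \langle \alpha \land T, H\rangle$, this equals $\sum_{\alpha \neq 0} \hat F_\alpha \cdot \E_D[\ch(D)] \cdot \E_{T,H}[\chi_{\alpha \land T}(H)]$. A term-by-term triangle inequality is hopeless because the high-degree part of $\sum_\alpha \norm{\hat F_\alpha}$ is controlled by Parseval only as $\sqrt{2^n w}$, so I split the sum at degree $k$. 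For $|\alpha| \le k$ the inner expectation depends on at most $k$ coordinates of $T$, so $\gamma$-almost-$k$-wise independence lets me replace $T$ by an object that is genuinely uniform on those coordinates, up to error $\gamma$ per relevant sub-character --- which is where the $\sqrt{\gamma}$ and $\gamma 4^k$ pieces enter, the $4^k$ coming from summing $\gamma$ over the $O(2^{2k})$ sub-supports that appear --- after which the $\delta$-bias of $D$ damps each such character; multiplying by $\sum_{1 \le |\alpha| \le k} \norm{\hat F_\alpha} \le \mathcal{L}(n,w;k)$ yields the $\sqrt{\delta}\,\mathcal{L}(n,w;k)$ term.

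The high-degree part $|\alpha| > k$ is the heart of the argument, and here I would use the grouping of \autoref{prop1}. Fix the adversarial, unknown order $\pi$ in which $F$ reads its bits. Every $\alpha$ with $|\alpha| > k$ has a well-defined position $i = i(\alpha) \in [n]$ at which it ``becomes high-degree'', namely the location in $\pi$-order of the $(k{+}1)$-st element of $\alpha$, so the high-degree characters partition into sets $\mathcal{H}_1, \dots, \mathcal{H}_n$ and it suffices to bound $\norm{\sum_{\alpha \in \mathcal{H}_i} \hat F_\alpha \, \E[\cdots]}$ for each $i$ and sum over $i$ --- this is the source of the factor $n$, and note that the grouping depends on $\pi$ while the construction does not, so the bound is uniform over all orders. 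For fixed $i$, factor $F = P \cdot Q$ at the middle layer after position $i$, with $P$ reading the first $i$ variables of $\pi$ and $Q$ the remaining $n-i$; since $P$ and $Q$ act on disjoint coordinates, $\hat F_\alpha = \hat P_\beta \hat Q_\gamma$, where $\beta$ is the weight-$(k{+}1)$ restriction of $\alpha$ to $P$'s coordinates (with largest $\pi$-element at position $i$) and $\gamma$ is the unrestricted restriction to $Q$'s coordinates. Summing over $\gamma$ folds $\sum_\gamma \hat Q_\gamma \chi_\gamma$ back into $Q$ itself, so the group sum becomes $\sum_\beta \hat P_\beta \, \E\bigl[\chi_\beta(\text{prefix})\, Q(\text{suffix})\bigr]$. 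Invoking the bounded communication of the roBP, I would condition on the prefix: the conditional law of the suffix of $D + T \land H$ is still a distribution of the same type on $Q$'s coordinates, hence fools $Q$ up to the per-step error and may be replaced by uniform, reducing $\E[\chi_\beta(\text{prefix})\,Q(\text{suffix})]$ to $\E[\chi_\beta(\text{prefix})] \cdot \hat Q_0$. Finally $|\beta| = k{+}1 \ge 1$, so $\E[\chi_\beta(\text{prefix})]$ is damped both by the $\delta$-bias of $D$ on a $\pi$-random roughly-half of the coordinates --- a Vazirani-XOR/Cauchy--Schwarz step turning $\delta$ into $\sqrt{\delta}$ --- and by the probability $\le 2^{-(k+1)}$ that $T$ wipes $\beta$ out entirely, which another Cauchy--Schwarz step turns into $(1/2)^{k/2}$; combined with $\sum_\beta \norm{\hat P_\beta} \le \mathcal{L}(n,w;k)$ and $\norm{\hat Q_0} \le \sqrt{w}$ and the factors of $n$ (groups) and $r$ (levels), this gives the claimed bound.

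The step I expect to be the main obstacle is the one just sketched: making rigorous the assertion that the variables $Q$ reads after position $i$ can be ignored --- that the suffix of $D + T \land H$ (and ultimately of $G_r^*$) fools $Q$ essentially independently of the prefix, with an error that composes cleanly through the $\log n$ levels. This requires (i) verifying that conditioning a $\delta$-biased / $\gamma$-almost-$k$-wise / $320k$-wise object on a prefix of coordinates leaves a distribution of essentially the same quality on the remaining coordinates, so the iteration closes, and (ii) using that $Q$ is itself a width-$w$ branching program --- so $\norm{Q(x)}^2 = w$ for all $x$ and the trivial level-$k$ Fourier-mass bounds apply --- in order to avoid the exponential blow-up a naive union bound over all $\gamma$ would incur. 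The remaining work --- pinning down the exact degree threshold, bookkeeping the discrepancy between $\gamma$-almost and exact $k$-wise independence, and tallying the $n$, $w$, $r$ factors --- is routine.
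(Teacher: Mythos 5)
Your high-level structure (single-step bound via Proposition~\ref{prop1}, iterated over $r$ levels, with a tame base case) matches the paper, and you correctly identify the grouping of high-degree characters by the position at which they become high-degree. However, there is a genuine misconception in what you flag as the ``main obstacle.'' You propose to condition on the prefix and argue that the suffix distribution still fools $Q = F^{>i}$, so that $\E[\chi_\beta(\text{prefix})\,Q(\text{suffix})]$ can be replaced by $\E[\chi_\beta(\text{prefix})]\cdot \hat Q_0$. This is both harder than necessary and not what the paper does: once you factor $\E_{D,T,U}[H_i\cdot F^{>i}] = \E_{D,T}\bigl[(\E_U H_i)(\E_U F^{>i})\bigr]$ (valid because, for fixed $D,T$, the two $U$-expectations act on disjoint coordinates of $U$), you need nothing about $F^{>i}$ being fooled --- you just apply sub-multiplicativity and then the \emph{trivial} bound $\norm{\E_U F^{>i}(D + T\land U)} \le \sqrt{w}$, which holds for any fixed $D,T$ (and indeed any distribution) because each evaluation $F^{>i}(x)$ has Frobenius norm exactly $\sqrt{w}$ and the norm is convex. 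No fooling argument, no conditioning, and no ``bounded communication'' beyond the factorization itself are needed; the concern about the iteration ``closing'' under conditioning is a red herring.

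Your accounting of the error terms is also scrambled relative to the actual proof. The $\gamma 4^k$ term does not arise in the low-degree single-step analysis; it comes from the base case, where the paper uses a Chernoff-type tail bound for $\gamma$-almost $k$-wise independent variables (Lemma~A.1 of \cite{svw14}) to show that $\abs{T_1 \land \cdots \land T_r} < 320k$ except with probability $r\bigl((1/2)^{\lfloor k/2\rfloor} + 2\gamma 4^k\bigr)$, at which point the $320k$-wise independent $G_0^*$ fools $f$ exactly. Your base-case discussion omits this probabilistic argument entirely --- merely noting that the coordinates ``roughly halve'' each round does not by itself give a quantitative failure probability. Likewise, the $\sqrt{\delta}\,\mathcal{L}(n,w;k)$ term comes from the high-degree analysis (from taking the square root of $\delta\,\mathcal{L}_k(F^{\le i})^2 + 2^{-k} + \gamma$ inside the per-$i$ Cauchy--Schwarz step); the low-degree term $L$ contributes only $\delta\sum_{i<k}\mathcal{L}_i(F)$, which is dominated. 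These misattributions would not necessarily sink the proof, but they do indicate that the single-step Fourier analysis you describe differs materially from the clean decomposition the paper carries out.
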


Since $G_r^*$ can be sampled at the cost of 
$$ O \big( \left(\log n + \log 1/\delta + k + \log1/\gamma \right) \cdot r + k\cdot \log n \big)$$
random bits, it suffices to set 
\begin{itemize}
\item $r =  \lceil \lg n \rceil$
\item $k = \lceil 3 \lg(nw/\varepsilon) \rceil$
\item $\gamma = (nw / \varepsilon)^{-9}$
\item $\delta = (nw  \mathcal{L}(n,w;k)  / \varepsilon)^{-3}$
\end{itemize}
to get a generator with seed length
$$ \ell = O \big( \left(\log(nw/\epsilon) +   \mathcal{L}(n,w;k)  \right) \cdot \log n \big)$$
that $O(\varepsilon)$-fools branching programs $F$ of width $w$.
From this we derive the following two corollaries by invoking either the trivial bound or the bound from \cite{chrt17} on the level-$k$ Fourier mass of width-$w$ branching programs.
The second of these corollaries proves \autoref{main2}.

\begin{corollary}
	There exists an explicit PRG with seed length
	$$ \ell = O \big( \log(nw/\varepsilon) \cdot \log^2 n \big) $$
	that $\varepsilon$-fools unknown-order branching programs of width $w$.
\end{corollary}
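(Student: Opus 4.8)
The plan is to obtain this as a corollary of \autoref{mainlem2} by plugging in the parameter settings listed just before the statement and then controlling the only free quantity that remains, $\mathcal{L}(n,w;k)$, via the trivial Fourier-mass bound recorded in \autoref{prelim}. Concretely, I would fix $r = \lceil \lg n\rceil$, $k = \lceil 3\lg(nw/\varepsilon)\rceil$, $\gamma = (nw/\varepsilon)^{-9}$, and $\delta = (nw\,\mathcal{L}(n,w;k)/\varepsilon)^{-3}$, and verify that each of the four terms $\sqrt{\delta}\,\mathcal{L}(n,w;k)$, $(1/2)^{k/2}$, $\sqrt{\gamma}$, and $\gamma 4^k$ appearing in the error bound of \autoref{mainlem2} is at most $O(\varepsilon/(nwr))$: the first directly from the choice of $\delta$; the second since $k \ge 2\lg(nwr/\varepsilon) + O(1)$; the third since $\gamma \le (\varepsilon/(nwr))^2$; and the fourth since $\gamma 4^k \le 4\,(nw/\varepsilon)^{-9}(nw/\varepsilon)^{6} = 4\,(nw/\varepsilon)^{-3}$. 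Summing the four terms and multiplying by $nwr$ gives error $O(\varepsilon)$, and applying the construction with target error $\varepsilon/c$ for the appropriate absolute constant $c$ yields error at most $\varepsilon$.

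For the seed length: sampling $G_r^*$ costs $O\big((\log n + \log(1/\delta) + k + \log(1/\gamma))\cdot r + k\log n\big)$ bits, and with the choices above every contribution other than $\log(1/\delta)$ is already $O(\log(nw/\varepsilon)\cdot\log n)$, so the only point needing attention is $\log(1/\delta) = 3\log\!\big(nw\,\mathcal{L}(n,w;k)/\varepsilon\big)$, i.e.\ we must upper bound $\log\mathcal{L}(n,w;k)$. Here I would invoke the trivial bound from \autoref{prelim}: for any width-$w$ branching program $F$ one has $\mathcal{L}_i(F) \le \binom{n}{i}^{1/2}w^{1/2} \le n^{i/2}w^{1/2}$, so $\mathcal{L}(n,w;k) = \max_F \sum_{i=1}^k \mathcal{L}_i(F) \le k\,n^{k/2}w^{1/2}$ and hence $\log\mathcal{L}(n,w;k) = O(k\log n + \log w) = O(\log(nw/\varepsilon)\cdot\log n)$, using $k = O(\log(nw/\varepsilon))$. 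Therefore $\log(1/\delta) = O(\log(nw/\varepsilon)\cdot\log n)$ as well, and the total seed length is $\ell = O\big(\log n\cdot(\log(nw/\varepsilon) + \log(nw/\varepsilon)\log n)\big) = O(\log(nw/\varepsilon)\cdot\log^2 n)$, as claimed. The only mildly delicate observation is that although $\mathcal{L}(n,w;k)$ itself is quasi\-polynomially large under the trivial bound, it enters the seed length only through its logarithm, which is comparable to the other contributions.

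Finally I would handle the ``unknown order'' clause, which \autoref{mainlem2} does not state explicitly. Given an unknown-order width-$w$ roBP $g(x) = f(x_{\pi(1)},\dots,x_{\pi(n)})$, write $g(x) = h(\pi\cdot x)$ where $h$ is the standard roBP computing $f$ and $(\pi\cdot x)_i = x_{\pi(i)}$. Permuting coordinates preserves each of the defining properties used to build $G_r^*$ ($\delta$-bias, $\gamma$-almost $k$-wise independence, and $k$-wise independence) and commutes with the recursion $G_{i+1}^* = D_i + T_i \land G_i^*$, so $\pi\cdot G_r^*$ is again an instance of the $G_r^*$ construction meeting the hypotheses of \autoref{mainlem2}; hence it $\varepsilon$-fools $h$. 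Since $\E\big[h(\pi\cdot G_r^*)\big] = \E\big[g(G_r^*)\big]$ and $\E_U h(U) = \E_U g(U)$, the generator $G_r^*$ itself $\varepsilon$-fools $g$. This is exactly the permutation-invariance point raised in the ``Pseudorandom Primitives'' discussion, and together with the substitution above and the trivial Fourier-mass bound it is really the only non-routine ingredient; the rest is arithmetic. I expect the bookkeeping around the four error terms — checking that the specific exponents in the choices of $\gamma$, $\delta$, and $k$ really do dominate the $nwr$ factor with room to spare — to be the most error-prone (though not conceptually hard) part.
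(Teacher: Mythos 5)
Your proposal is correct and follows the same route the paper takes: plug the stated parameters into \autoref{mainlem2}, control $\log\mathcal{L}(n,w;k)$ via the trivial Fourier-mass estimate from \autoref{prelim}, and appeal to permutation-invariance of the pseudorandom primitives (noted in the ``Pseudorandom Primitives'' subsection) for the unknown-order clause. Your seed-length accounting also implicitly corrects what looks like a typo in the paper's intermediate display, where $\ell = O\big((\log(nw/\varepsilon)+\mathcal{L}(n,w;k))\cdot\log n\big)$ is written with $\mathcal{L}(n,w;k)$ in place of the clearly intended $\log\mathcal{L}(n,w;k)$, since $\delta$ only enters the seed length through $\log(1/\delta)$.
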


\begin{corollary}
	There exists an explicit PRG with seed length
	$$ \ell = O \big( w \cdot \log(nw/\varepsilon) \cdot \log n \cdot \log \log n \big) $$
	that $\varepsilon$-fools unknown-order branching programs of width $w$.
\end{corollary}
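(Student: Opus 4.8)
The plan is to instantiate the second construction $G_r^*$ of \autoref{mainlem2} with the parameter schedule listed just above, and to control the quantity $\mathcal{L}(n,w;k)$ appearing in its error bound via the level-$k$ Fourier mass bound of Chattopadhyay, Hatami, Reingold and Tal~\cite{chrt17}. Concretely, take $r=\lceil\lg n\rceil$, $k=\lceil 3\lg(nw/\varepsilon)\rceil$, $\gamma=(nw/\varepsilon)^{-9}$, and $\delta=(nwM/\varepsilon)^{-3}$, where in place of $\mathcal{L}(n,w;k)$ we use the explicit, efficiently computable upper bound $M\eqdef O(\log n)^{wk}$. This is a legitimate bound on $\mathcal{L}(n,w;k)$ because, by \cite{chrt17}, $\mathcal{L}_i(F)\le O(\log n)^{wi}$ for every $i$, so $\mathcal{L}(n,w;k)=\max_F\sum_{i=1}^k\mathcal{L}_i(F)\le\sum_{i=1}^k O(\log n)^{wi}=O(\log n)^{wk}=M$, the geometric sum collapsing to (a constant times) its largest term. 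Using $M$ in place of $\mathcal{L}(n,w;k)$ itself keeps the generator explicit (we never have to compute $\mathcal{L}(n,w;k)$) and, since $M\ge\mathcal{L}(n,w;k)$, can only shrink $\delta$ and hence only help the error. Finally, since $G_r^*$ is built from permutation-invariant primitives ($D_i$, $T_i$, and $G_0^*$) while the analysis of \autoref{mainlem2} is carried out for an arbitrary fixed ordering of the bits, the resulting error bound applies to the matrix encoding of any width-$w$ roBP read in any order, so it is enough to reason about a fixed order.

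The first step is to check the error is $O(\varepsilon)$ by plugging the schedule into \autoref{mainlem2}. Writing $N\eqdef nw/\varepsilon$, the four quantities inside the parenthesis are bounded by $\sqrt{\delta}\,\mathcal{L}(n,w;k)\le\sqrt{\delta}\,M=(NM)^{-3/2}M\le N^{-3/2}$ (using $M\ge 1$), by $(1/2)^{k/2}\le N^{-3/2}$, by $\sqrt{\gamma}=N^{-9/2}$, and by $\gamma 4^k\le 4N^{-3}$ (using $4^{\lg N}=N^2$). Their sum is $O(N^{-3/2})$, and multiplying by $nwr=N\varepsilon r$ gives $O(N^{-1/2}\varepsilon r)\le O(\varepsilon\log n/\sqrt n)=O(\varepsilon)$, since $N\ge n$ and $r=O(\log n)$. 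After rescaling $\varepsilon$ by a constant this is an $\varepsilon$-fooling generator for width-$w$ roBPs in every order.

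The second step is the seed-length accounting, using the sampling costs recorded in \autoref{prelim}: $G_r^*$ needs $O\big((\log n+\log(1/\delta)+k+\log(1/\gamma))\cdot r+k\log n\big)$ bits. Here $\log(1/\delta)=3\log(nwM/\varepsilon)=O(\log(nw/\varepsilon)+\log M)$, while $k$, $\log(1/\gamma)$, and $\log n$ are all $O(\log(nw/\varepsilon))$ and $r=O(\log n)$; so the seed length is $O\big((\log(nw/\varepsilon)+\log M)\cdot\log n\big)$. Plugging in $\log M=O(wk\log\log n)=O\big(w\log(nw/\varepsilon)\log\log n\big)$, which dominates $\log(nw/\varepsilon)$, yields the claimed $\ell=O\big(w\cdot\log(nw/\varepsilon)\cdot\log n\cdot\log\log n\big)$. (Substituting instead the trivial bound $\mathcal{L}(n,w;k)\le\binom{n}{k}w^{1/2}$, so that $\log M=O(k\log n+\log w)=O(\log(nw/\varepsilon)\log n)$, the identical computation gives the $O(\log(nw/\varepsilon)\log^2 n)$ seed length of the preceding corollary.)

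Essentially all of the real work is contained in \autoref{mainlem2}, which is being assumed here, so the corollary itself is mostly bookkeeping. The two points that warrant a little care are: (i) the apparent circularity of letting $\mathcal{L}(n,w;k)$ occur in the definition of $\delta$ — harmless, since $\mathcal{L}(n,w;k)$ depends only on $n,w,k$ and not on the generator, and in any case we substitute the computable bound $M$; and (ii) the fact that \cite{chrt17} bounds a \emph{single} level $\mathcal{L}_k$ whereas $\mathcal{L}(n,w;k)$ is the \emph{cumulative} mass $\sum_{i\le k}\mathcal{L}_i$, which is handled by the geometric-series collapse noted above, costing only a constant factor absorbed into the $O(\cdot)$.
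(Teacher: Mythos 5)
Your proof is correct and follows the same route as the paper: instantiate $G_r^*$ with the parameter schedule given after \autoref{mainlem2} and use the Chattopadhyay--Hatami--Reingold--Tal bound $\mathcal{L}_k(F)\le O(\log n)^{wk}$ to control $\log\mathcal{L}(n,w;k)$, giving seed length $O\big((\log(nw/\varepsilon)+wk\log\log n)\cdot\log n\big)=O(w\log(nw/\varepsilon)\log n\log\log n)$. You also correctly read the paper's seed-length display as containing $\log\mathcal{L}(n,w;k)$ rather than $\mathcal{L}(n,w;k)$ (an evident typo there), and your verification of the $O(\varepsilon)$ error bound and your note on avoiding circularity by using the computable bound $M$ are sound, filling in details the paper leaves implicit.
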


\section{Proof Strategy}\label{strategy}

We show that $G_r$ successfully fools branching programs with the following inductive analysis.
By adding and subtracting the term $F(D_i + T_i \land U)$, we have 

\begin{align*}
\norm{\E_{G_i} F(G_i) - \E_U F(U)}
&= \norm{\E_{D_i} \E_{T_i} \E_{G_{i-1}} F(D_i + T_i \land G_{i-1}) - \E_U F(U)} \\\
&\leq \norm{\E_{D_i} \E_{T_i} \E_{U} F(D_i + T_i \land U) - \E_U F(U)}  \\
& \; +  \E_{D_i} \E_{T_i} \norm{\E_{G_{i-1}} F(D_i + T_i \land G_{i-1}) - \E_{U} F(D_i + T_i \land U)}.
\end{align*}

Since, for any fixed vectors $d,t \in \F_2^n$, the function $F'(x) : = F(d + t \land x)$ is again some branching program, we argue that $F'$ is fooled inductively.
The bulk of our proof is then spent arguing that $\E_{D,T,U} F(D + T \land U) \approx \E_U F(U)$.

The starting point of this argument is the observation that the ``noise-like'' distribution $T \land U$ successfully fools any function that is divisible by a high-degree Fourier character.
Specifically, suppose that $T$ is $k$-wise independent and that $\alpha \in \F_2^n$ has hamming weight $|\alpha| \geq k$.
Let $g : \F_2^n \rightarrow [-1, 1]$ be an arbitrary function such that $g$ and $\ch$ depend on disjoint sets of input bits.
Then $T \land U$ fools the function $f \eqdef \ch \cdot g$ with error $\varepsilon = \nicefrac{1}{2^k}$:
\[
	\E_U f(U) = \paren{ \E_U \ch(U) } \paren{ \E_U g(U) } = 0
	\;,
\]
and
\begin{align*}
	|\E_{T,U} f(T \land U)|
	&=\left| \E_T \big( \E_U \ch(T \land U) \big) \big(\E_U g(U)  \big) \right|\\
	&\leq \E_T| \E_U \ch(T \land U) | | \E_U g(U) |\\
	&\leq \E_T |\E_U  \ch(T \land U)  | \\
	&=\E_T \ind{ \alpha \land T = 0} \\
	&\leq \frac{1}{2^k}
	\;.
\end{align*}

Overall, we wish to enact the following plan. 
If $F$ encodes our branching program, we wish to use Fourier analysis to rewrite $F$ as a sum of simpler terms, and then use linearity of expectation together with a triangle inequality to argue that $D+T \land U$ fools each term separately.\footnote{
This approach is inspired by the similar arguments of Haramaty, Lee, and Viola employed in \cite{hlv17}. However, the generators they produce with this idea have seed length $\geq n^{1/2}$, while we achieve generators with poly-logarithmic seed-length. }
Recall the product structure of $F$,
$$ F(x) = F_1(x_1) F_2(x_2) \cdots F_n(x_n), $$
where $F_i(x_i) = A_{i, x_i}$.
We can imagine taking the Fourier expansions of these one-bit factors:
\[
	F_i(x_i) = \tfrac{1}{2} (A_{i,0} + A_{i,1}) + \tfrac{1}{2} (A_{i,0} - A_{i,1})  (-1)^{x_i} \defeq B_{i,0} + B_{i,1} (-1)^{x_i}
	\;,
\]
so that $F$ has the form
\[
	F(x) = \prod_{i=1}^n \big( B_{i,0} + B_{i,1} (-1)^{x_i} \big)
	\;.
\]
If we expand this product completely, we recover the Fourier expansion of $F$.
Certainly the terms in this sum are simple enough, but the problem is that there are too many of them; we cannot afford a $2^n$-fold triangle inequality.
Instead, we expand this product more ``slowly'': only until we see that some term has collected a degree-$k$ Fourier character as a factor.
By the above observation, this term can be killed immediately, and we go on expanding the remaining terms.
At the end, we are left with only low-degree Fourier coefficients, which can all be fooled by the distribution $D$.
By doing this carefully, we get by with only $n$ applications of the triangle inequality.

\section{Proof of \autoref{mainlem1}}

Suppose 
\[
	F(x) = F_1 (x_1) F_2(x_2) \cdots F_n(x_n)
\]
encodes a branching program. We define the subprograms of $F$,
\[
	F^{\leq i}(x_1,\ldots,x_i)\eqdef F_1(x_1) F_2(x_2) \cdots F_i(x_i)
\]
and 
\[
	F^{> i}(x_{i+1},\ldots,x_n) \eqdef F_{i+1}(x_{i+1}) F_{i+2}(x_{i+2}) \cdots F_{n}(x_n)
	\;,
\]
so that $F = F^{\leq i} \cdot F^{> i}$ for any $i$ (we define $F^{> n}$ as the empty product, so that it is an identity matrix $I$). 
With this notation in place, we can re-express $F$ in the following convenient form.

\begin{proposition} \label{prop1}
	Suppose $F : \F_2^n \rightarrow \R^{w \times w}$ encodes a branching program. Then for any $k\ge 1$, $F$ can be written as
	$$ F = \hat{F}_{0} + L + \sum_{i = 1}^n H_i \cdot F^{> i}, $$
	where
	$$ L = \sum_{\substack{\alpha \in \F_2^n \\ 0 < |\alpha| < k}} \hat{F}_\alpha \chi_\alpha,$$
	and
	$$ H_i = \sum_{\substack{\alpha \in \F_2^i \\ |\alpha| = k \\ \alpha_i = 1}} \hat{F}^{\leq i}_\alpha \chi_\alpha.$$
\end{proposition}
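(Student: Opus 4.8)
The plan is to sort the Fourier expansion $F = \sum_{\alpha \in \F_2^n} \hat{F}_\alpha \chi_\alpha$ according to \emph{when} each character first reaches degree $k$, which realizes concretely the ``slow expansion'' idea described in \autoref{strategy}. First split the sum into its low-degree part $\sum_{|\alpha| < k}\hat{F}_\alpha \chi_\alpha$, which is by definition $\hat{F}_0 + L$, and its high-degree part $\sum_{|\alpha| \ge k}\hat{F}_\alpha \chi_\alpha$; it then remains to show the high-degree part equals $\sum_{i=1}^n H_i \cdot F^{>i}$. For $\alpha$ with $|\alpha| \ge k$, let $i(\alpha)$ be the coordinate in which the $k$-th nonzero entry of $\alpha$ occurs (scanning $\alpha_1, \alpha_2, \dots$ from left to right); this is a well-defined index with $k \le i(\alpha) \le n$. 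I would partition the high-degree characters according to the value of $i(\alpha)$ and show that, for each $i$, the block $\{\alpha : i(\alpha) = i\}$ contributes exactly $H_i \cdot F^{>i}$.

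Fix $i$. The characters with $i(\alpha) = i$ are precisely those $\alpha = (\beta,\tau)$ with $\beta \in \F_2^i$ of weight $|\beta| = k$ and $\beta_i = 1$, and $\tau \in \F_2^{n-i}$ arbitrary. The structural fact I will invoke is that for such a split, $\hat{F}_\alpha = \hat{F}^{\le i}_\beta \cdot \hat{F}^{>i}_\tau$ and $\chi_\alpha = \chi_\beta \cdot \chi_\tau$: since $F = F^{\le i}\cdot F^{>i}$, with $F^{\le i}$ and $\chi_\beta$ depending only on $x_1,\dots,x_i$ while $F^{>i}$ and $\chi_\tau$ depend only on $x_{i+1},\dots,x_n$, the expectation $\E_x F(x)\chi_\alpha(x)$ factors across these two disjoint blocks of variables, with the prefix matrix factor staying on the left. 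Summing $\hat{F}_\alpha \chi_\alpha = \hat{F}^{\le i}_\beta \chi_\beta \cdot \hat{F}^{>i}_\tau \chi_\tau$ over all admissible $\tau$ reconstructs $\hat{F}^{\le i}_\beta \chi_\beta \cdot F^{>i}$, and then summing over the admissible $\beta$ gives exactly $H_i \cdot F^{>i}$. Summing this over $i$ — adopting the harmless convention $H_i = 0$ for $i < k$, which is forced since a weight-$k$ string cannot be supported on fewer than $k$ coordinates — yields the high-degree part of $F$; the degenerate case $k > n$ is covered as well, since then there are no high-degree characters and the identity reduces to $F = \hat{F}_0 + L$.

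I do not expect a genuine obstacle: the entire point is that ``grouping a Fourier character by the index at which its degree reaches $k$'' sets up a bijection between the high-degree characters $\alpha$ and the triples $(i,\beta,\tau)$ indexing $\sum_i H_i F^{>i}$, and the rest is bookkeeping. The one step deserving care is the factorization $\hat{F}_\alpha = \hat{F}^{\le i}_\beta \cdot \hat{F}^{>i}_\tau$ combined with the non-commutativity of matrix multiplication, so that the prefix factor must remain on the left throughout. An equivalent and perhaps more ``narrative'' route would be to prove the stronger invariant $F = \big(\sum_{\alpha \in \F_2^m,\; |\alpha| < k}\hat{F}^{\le m}_\alpha \chi_\alpha\big)\cdot F^{>m} + \sum_{i=1}^m H_i \cdot F^{>i}$ by induction on $m = 0,1,\dots,n$: at each step one expands a single one-bit factor $F_{m+1}(x_{m+1}) = B_{m+1,0} + B_{m+1,1}(-1)^{x_{m+1}}$, observes that the terms whose degree jumps to exactly $k$ are precisely those forming $H_{m+1}$, and folds them out, so that the case $m = n$ is the proposition. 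The combinatorial heart is identical.
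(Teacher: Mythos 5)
Your proposal is correct and takes essentially the same approach as the paper's proof: the paper simply runs the same calculation in the reverse direction, expanding $\sum_i H_i \cdot F^{>i}$ via the Fourier expansions of $H_i$ and $F^{>i}$, merging the sums using the factorization $\hat{F}_{\alpha\beta} = \hat{F}^{\le i}_\alpha \hat{F}^{>i}_\beta$ (with the prefix factor on the left), and observing that the resulting index set is exactly $\{\alpha : |\alpha| \ge k\}$. The bijection you describe between high-degree characters and triples $(i,\beta,\tau)$, and your caution about matrix non-commutativity, are precisely the content of the paper's argument.
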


\begin{proof}
	We verify that the expression has the same Fourier expansion as $F$. 
	\begin{align*}
		\sum_{i=1}^n H_i \cdot F^{>i}
		&=	\sum_{i=1}^n 
			\Bigg(  \sum_{\substack{\alpha \in \F_2^i \\ |\alpha| = k \\ \alpha_i = 1}} \hat{F}^{\leq i}_\alpha \chi_\alpha \Bigg)
			\Bigg(  \sum_{\beta \in \F_2^{n-i}} \hat{F}^{> i}_\beta \chi_\beta\Bigg) \\
		&= 	\sum_{i=1}^n 
			\sum_{\substack{\alpha \in \F_2^i \\ |\alpha| = k \\ \alpha_i = 1}}
			\sum_{\beta \in \F_2^{n-i}}
			\hat{F}_{\alpha \beta} \chi_{\alpha \beta} \\
		&= 	\sum_{i=1}^n 
			\sum_{\substack{\alpha \in \F_2^n \\ \alpha_1 + \alpha_2 + \cdots + \alpha_i = k \\ \alpha_i = 1}}
			\hat{F}_{\alpha} \chi_{\alpha}(x) \\
		&= 	\sum_{\substack{\alpha \in \F_2^n \\ |\alpha| \geq  k}}
			\hat{F}_{\alpha} \chi_{\alpha}
			\;.
			\qedhere
	\end{align*}
\end{proof}

Now we derive a useful expectation bound for functions whose Fourier expansions are only supported at degree $k$.

\begin{lemma} \label{prooflem1}
	Let
	\[
		H(x) = \sum_{\substack{\alpha \in \F_2^n \\ |\alpha| = k}} \hat{H}_{\alpha} \ch(x)
	\]
	be some function whose Fourier expansion is supported only at degree $k$. 
	Let $D$, $T$, and $U$ denote respectively a $2k$-wise independent, a $k$-wise independent, and a uniform distribution over $\F_2^n$.
	Then we have
	\[
		\E_{D,T} \| \E_U H(D + T \land U) \|^2 
		\leq \frac{1}{2^k}\sum_{|\alpha| = k} \|\hat{H}_\alpha \|^2 
		\;.
	\]
\end{lemma}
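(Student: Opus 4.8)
The plan is to open everything up with the Fourier expansion of $H$ and reduce the claim to two short coordinate-wise computations, using crucially that $D$ and $T$ are independent and that the character of a sum factors. First I would write, using $\ch(d + t\land u) = \ch(d)\,\ch(t \land u)$,
\[
	\E_U H(D + T\land U) = \sum_{|\alpha| = k} \hat{H}_\alpha\, \ch(D)\, \E_U \ch(T \land U)\;,
\]
and then evaluate the inner expectation coordinate by coordinate: $\E_U \ch(T \land U) = \prod_{i:\alpha_i = 1}\E_{U_i}(-1)^{T_i U_i} = \ind{\alpha \land T = 0}$, since a coordinate with $\alpha_i = T_i = 1$ contributes $\E_{U_i}(-1)^{U_i} = 0$. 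Hence $\E_U H(D + T\land U) = \sum_{|\alpha|=k} \hat{H}_\alpha\,\ch(D)\,\ind{\alpha\land T = 0}$, which is a matrix-valued random variable depending only on $D$ and $T$.

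Next I would compute the expected squared Frobenius norm, taking the expectation over $D$ first with $T$ held fixed (legitimate since $D$ and $T$ are independent and the indicator does not involve $D$). Expanding the norm as a Frobenius inner product and using $\ch(D)\chb(D) = \chi_{\alpha+\beta}(D)$ produces a double sum over $\alpha,\beta$ of $\ip{\hat{H}_\alpha}{\hat{H}_\beta}\,\ind{\alpha\land T = 0}\,\ind{\beta \land T = 0}\,\E_D\chi_{\alpha+\beta}(D)$. Since $|\alpha + \beta| \le |\alpha| + |\beta| = 2k$ and $D$ is $2k$-wise independent, $\E_D\chi_{\alpha+\beta}(D) = \E_U\chi_{\alpha+\beta}(U) = \ind{\alpha = \beta}$, so the double sum collapses to the diagonal $\sum_{|\alpha|=k}\norm{\hat{H}_\alpha}^2\,\ind{\alpha\land T = 0}$. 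Finally, taking the expectation over $T$ and using that $|\alpha| = k$ and $T$ is $k$-wise independent, $\E_T\ind{\alpha\land T = 0} = \P(T_i = 0 \text{ for all } i \text{ with } \alpha_i = 1) = 2^{-k}$, which yields $\E_{D,T}\norm{\E_U H(D + T\land U)}^2 = \tfrac{1}{2^k}\sum_{|\alpha|=k}\norm{\hat{H}_\alpha}^2$ — in fact with equality, slightly stronger than the stated bound.

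The argument is essentially routine once organized this way; there is no real obstacle. The only points requiring care are bookkeeping: that the weight bound $|\alpha + \beta| \le 2k$ is exactly what the $2k$-wise independence of $D$ needs in order to kill all off-diagonal cross terms, that $|\alpha| = k$ is exactly what the $k$-wise independence of $T$ needs in order to make the indicator event have probability exactly $2^{-k}$, and that the independence of $D$ from $T$ is what licenses averaging over $D$ with $T$ fixed and only afterwards averaging over $T$.
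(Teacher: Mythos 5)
Your proof is correct and takes essentially the same route as the paper's: evaluate $\E_U\ch(T\land U)=\ind{\alpha\land T=0}$, expand the Frobenius norm squared into a double sum, collapse to the diagonal via the $2k$-wise independence of $D$, then evaluate the remaining indicator via the $k$-wise independence of $T$. The only cosmetic difference is that you factor the indicator as $\ind{\alpha\land T=0}\,\ind{\beta\land T=0}$ where the paper writes the equivalent $\ind{(\alpha\lor\beta)\land T=0}$, and you correctly observe that the bound is in fact an equality.
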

\begin{proof}
	Firstly, note that
	\begin{align*}
		\E_U H(D + T \land U) 
		&=\E_U \sum_{|\alpha| = k} \hat{H}_\alpha \cdot \chi_\alpha(D + T \land U) \\
		&=\E_U \sum_{|\alpha| = k} \hat{H}_\alpha \cdot \chi_\alpha(D)  \cdot \chi_\alpha(T \land U) \\
		&=\sum_{|\alpha| = k} \hat{H}_\alpha \cdot \chi_\alpha(D) \cdot \E_U \chi_\alpha(T \land U) \\
		&=\sum_{|\alpha| = k} \hat{H}_\alpha \cdot \chi_\alpha(D) \cdot \ind{\alpha \land T = 0}
		\;.
	\end{align*}

	Letting $\ip{\cdot}{\cdot}$ denote the Frobenius matrix inner product, we have
	\begin{align*}
		\| \E_U H(D + T \land U) \|^2 
		&=  \ip{ \E_U H(D + T\land U)}{ \E_U H(D + T\land U) }\\
		&=  \ip{  \sum_{|\alpha| = k} \hat{H}_\alpha \chi_\alpha(D) \ind{\alpha \land T = 0}}{ \sum_{|\beta| = k} \hat{H}_\beta \chi_\beta(D) \ind{\beta \land T = 0} }\\
		&= \sum_{|\alpha| = k} \sum_{|\beta| = k} \ip{ \hat{H}_\alpha}{ \hat{H}_\beta} \cdot \chi_{\alpha + \beta}(D) \cdot \ind{(\alpha \lor \beta) \land T  = 0}
	\end{align*}
	and
	\begin{align*}
		\E_{D} \| \E_U H(D + T \land U) \|^2 
		&= \E_D \sum_{|\alpha| = k} \sum_{|\beta| = k} \ip{ \hat{H}_\alpha}{ \hat{H}_\beta} \chi_{\alpha + \beta}(D) \ind{(\alpha \lor \beta) \land T  = 0} \\
		&= \sum_{|\alpha| = k} \sum_{|\beta| = k} \ip{ \hat{H}_\alpha}{ \hat{H}_\beta} \Big( \E_D \chi_{\alpha + \beta}(D) \Big) \ind{ (\alpha \lor \beta) \land T = 0} \\
		&= \sum_{|\alpha| = k} \ip{ \hat{H}_\alpha}{ \hat{H}_\alpha} \ind{\alpha \land T = 0} \\
		&= \sum_{|\alpha| = k} \norm{\hat{H}_\alpha}^2 \ind{\alpha \land T = 0}
		\;.
	\end{align*}
	Finally,
	\begin{align*}
		\E_{D,T} \| \E_U H(D + T \land U) \|^2 
		&= \E_T \E_D \| \E_U H(D + T \land U) \|^2  \\
		&= \E_T \sum_{|\alpha| = k} \norm{\hat{H}_\alpha}^2 \ind{\alpha \land T = 0}\\
		&= \sum_{|\alpha| = k} \norm{\hat{H}_\alpha}^2 \cdot \E_T \ind{\alpha \land T = 0} \\
		&=\sum_{|\alpha| = k} \norm{\hat{H}_\alpha}^2 \cdot \left( \tfrac{1}{2} \right)^k
		\;.
		\qedhere
	\end{align*}
\end{proof}

We now have the tools in place to prove our main technical lemma.

\begin{lemma} \label{prooflem2}
	Suppose $F : \F_2^n \rightarrow \R^{w \times w}$ encodes a branching program.
	Let $D$, $T$, and $U$  denote respectively a $2k$-wise independent, a $k$-wise independent, and a uniform distribution over $\F_2^n$.
	Then $D + T \land U$ fools $F$ with error
	$$ \varepsilon =  \norm{\E_{D,T,U}F(D + T \land U) - \E_U F(U)} \leq \frac{n w}{2^{k/2}} .$$
\end{lemma}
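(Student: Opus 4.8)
The plan is to combine the structural decomposition of \autoref{prop1} with the level-$k$ expectation bound of \autoref{prooflem1}, using a single $n$-fold triangle inequality. Write $F = \hat{F}_0 + L + \sum_{i=1}^n H_i \cdot F^{>i}$ as in \autoref{prop1}. Since $\E_U F(U) = \hat{F}_0$, the error matrix we must bound is
\[
	\E_{D,T,U}\Bigl( L(D+T\land U) + \sum_{i=1}^n H_i(D+T\land U) \cdot F^{>i}(D+T\land U) \Bigr) - \bigl(\E_U L(U) + \textstyle\sum_i \E_U H_i(U)\cdot F^{>i}(U)\bigr),
\]
where I have also subtracted the corresponding uniform-expectation terms (which vanish: $\E_U L(U)=0$ since $L$ has no degree-$0$ part, and $\E_U H_i(U) = 0$ since every character in $H_i$ has degree $k \ge 1$, and $H_i$, $F^{>i}$ act on disjoint coordinate blocks). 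So the error is at most $\norm{\E_{D,T,U} L(D+T\land U)}$ plus $\sum_{i=1}^n \norm{\E_{D,T,U}\, H_i(D+T\land U)\cdot F^{>i}(D+T\land U)}$.

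For the low-degree term $L$: since $D$ is $2k$-wise independent and every character in $L$ has degree $< k < 2k$, $D$ fools each such character exactly, and more carefully $D + T\land U$ does too — indeed $\E_{T,U}\chi_\alpha(D+T\land U) = \chi_\alpha(D)\cdot\ind{\alpha\land T=0}$ after taking $\E_U$, and then $\E_D$ of $\chi_\alpha(D)$ is $0$ for $0 < |\alpha| < k \le 2k$. So $\norm{\E_{D,T,U} L(D+T\land U)} = 0$; the low-degree part contributes nothing. (This is cleaner than saying "$D$ fools $L$" — the AND-noise only helps.)

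For each term $H_i \cdot F^{>i}$: here $H_i$ depends only on coordinates $1,\dots,i$ and $F^{>i}$ only on coordinates $i+1,\dots,n$, so these are independent under any product-noise distribution, and after taking $\E_U$ the matrix $\E_U[H_i(D+T\land U)\cdot F^{>i}(D+T\land U)]$ factors as $\bigl(\E_U H_i(D+T\land U)\bigr)\cdot\bigl(\E_U F^{>i}(D+T\land U)\bigr)$ — wait, one must be careful since $D,T$ are shared, but conditioned on $(D,T)$ the two blocks of $U$ are independent, so it does factor. Taking Frobenius norms and using $\norm{AB}\le\norm{A}\,\norm{B}$ along with the crude bound $\norm{F^{>i}(x)} = \sqrt{w}$ (a transition matrix has exactly $w$ ones), we get $\norm{\E_{D,T,U} H_i(D+T\land U)\cdot F^{>i}(D+T\land U)} \le \sqrt{w}\cdot\E_{D,T}\norm{\E_U H_i(D+T\land U)}$. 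Now $H_i$ is supported purely at degree $k$ (in $i$ variables, but we may pad to $\F_2^n$), so by \autoref{prooflem1} and Jensen, $\E_{D,T}\norm{\E_U H_i(D+T\land U)} \le \bigl(\E_{D,T}\norm{\E_U H_i(D+T\land U)}^2\bigr)^{1/2} \le 2^{-k/2}\bigl(\sum_{|\alpha|=k}\norm{\widehat{(H_i)}_\alpha}^2\bigr)^{1/2}$. The Fourier coefficients of $H_i$ are coefficients of $F^{\le i}$, and $\sum_{\alpha}\norm{\widehat{F^{\le i}}_\alpha}^2 = \E_x\norm{F^{\le i}(x)}^2 = w$ by Parseval, so $\bigl(\sum_{|\alpha|=k}\norm{\widehat{(H_i)}_\alpha}^2\bigr)^{1/2}\le\sqrt{w}$. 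Hence each term is at most $\sqrt{w}\cdot 2^{-k/2}\cdot\sqrt{w} = w/2^{k/2}$, and summing over $i=1,\dots,n$ gives $\varepsilon \le n w / 2^{k/2}$.

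The main obstacle — the only place real thought is needed — is justifying that after applying $\E_U$ the cross term genuinely splits as a product over the two coordinate blocks while $D,T$ remain coupled; once one fixes $(D,T)$ and notes the two halves of $U$ are independent, this is routine, but it must be stated carefully since $D$ and $T$ are global. Everything else is bookkeeping: \autoref{prop1} supplies the decomposition, \autoref{prooflem1} supplies the $2^{-k/2}$ savings on each degree-$k$ block, Parseval bounds the Fourier $\ell_2$-mass of each subprogram by $\sqrt w$, and the submultiplicativity $\norm{AB}\le\norm A\norm B$ of the Frobenius norm together with $\norm{F^{>i}(x)}=\sqrt w$ handles the tail factor.
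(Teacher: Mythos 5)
Your proposal is correct and follows essentially the same route as the paper: decompose $F$ via \autoref{prop1}, observe the low-degree part is killed by $D$, and for each $H_i\cdot F^{>i}$ term factor the $\E_U$ over the disjoint coordinate blocks, bound $\|\E_U F^{>i}\|\le\sqrt{w}$, apply Jensen and \autoref{prooflem1} to extract $2^{-k/2}$, and use Parseval on $F^{\le i}$ to get the remaining $\sqrt{w}$. Your extra care about the factorization (conditioning on $(D,T)$ so the two halves of $U$ decouple) is exactly the right thing to be careful about, and the subtraction of the vanishing uniform-expectation terms is a harmless but unnecessary step since $\E_U F(U)=\hat F_0$ already cancels the first summand of the decomposition directly.
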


\begin{proof}
	To analyze $\| \E_{D,T,U} F(D + T \land U) - \E_{U} F(U) \|$, we use the preceding expansion of $F$ together with linearity of expectation and the triangle inequality.
	Recalling that $\E_U F(U) = \hat{F}_0$, this gives
	\begin{align*}
		\| \E_{D,T,U} F(D + T \land U) - \E_{U} F(U) \| &\leq \\
		\| \E_{D,T,U} L(D + T \land U)  \|& + \sum_{i=1}^n \| \E_{D,T,U} H_i(D + T \land U) F^{> i}(D + T \land U) \|
		\;.
	\end{align*}
	The low-degree term is dealt with easily by $D$:
	\begin{align*}
		\E_{D,T,U} L(D + T \land U)
		&=\E_D \E_T \E_U \sum_{\substack{\alpha \in \F_2^n \\ 0 < |\alpha| < k}} \hat{F}_\alpha \chi_\alpha(D + T \land U) \\
		&=\E_D \E_T \E_U \sum_{\substack{\alpha \in \F_2^n \\ 0 < |\alpha| < k}} \hat{F}_\alpha \ch(D) \ch(T \land U) \\
		&=\E_T \E_U \sum_{\substack{\alpha \in \F_2^n \\ 0 < |\alpha| < k}} \hat{F}_\alpha \cdot \Big( \E_D \ch(D) \Big) \cdot \ch(T \land U) \\
		&=0
		\;.
	\end{align*}
	Now for each $i$ we have
	\begin{align*}
		\| \E_{D,T,U} H_i(D + T \land U) F^{> i}(D + T \land U) \| 
		&=\left\| \E_{D,T} \big(\E_U H_i(D + T \land U) \big) \big(\E_U F^{> i} (D + T \land U) \big) \right\|\\
		&\le\E_{D,T} \| \E_U H_i(D + T \land U) \|  \|\E_U F^{> i} (D + T \land U) \| \\
		&\le\E_{D,T} \| \E_U H_i(D + T \land U) \| w^{1/2} \\
		&\le\Big( \E_{D,T} \|\E_U H_i(D + T \land U) \|^2 \Big)^{1/2} w^{1/2} \\
		&\le\left( \tfrac{1}{2} \right)^{k/2} \Big( \sum_{\alpha \in \F_2^n} \norm{\hat{F}^{\leq i}_\alpha}^2 \Big)^{1/2} w^{1/2}\\
		&=\left( \tfrac{1}{2} \right)^{k/2} w,
	\end{align*}
	where we get the final equality by applying the Parseval identity to $F^{\leq i}$.
\end{proof}

\subsection{Proof of \autoref{mainlem1}}

\begin{proof}
	Recall the induction framework outlined in \autoref{strategy}:

	\begin{align*}
		\norm{\E_{G_i} F(G_i) - \E_U F(U)}
		&= \norm{\E_{D_i} \E_{T_i} \E_{G_{i-1}} F(D_i + T_i \land G_{i-1}) - \E_U F(U)} \\
		&\leq \norm{\E_{D_i} \E_{T_i} \E_{U} F(D_i + T_i \land U) - \E_U F(U)}  \\
		& \; +  \E_{D_i} \E_{T_i} \norm{\E_{G_{i-1}} F(D_i + T_i \land G_{i-1}) - \E_{U} F(D_i + T_i \land U)}.
	\end{align*}

	We have seen how to carry out the inductive step; it remains to establish the base case.
	To do this, we wish to think of $F(G_r)$ as a function of $G_0$ only, with $D_1, D_2, \ldots, D_r$ and $T_1, T_2, \ldots, T_r$ fixed. 

	Specifically, we do the following. 
	Define the strings $g_0 \eqdef x$ and $g_{i} \eqdef D_i + T_i \land g_{i-1}$, and define the function $f(x) \eqdef F(g_r)$.
	Note that with this setup we have $F(G_r) = f(G_0)$, and so

	\begin{align*}
	\norm{\E_{G_r} F(G_r) - \E_U F(U)}
	&= \norm{\E_{D_r} \E_{T_r} \E_{G_{r-1}} F(D_r + T_r \land G_{r-1}) - \E_U F(U)} \\
	&\leq \frac{nw}{2^{k/2}} +  \E_{D_r} \E_{T_r} \norm{\E_{G_{r-1}} F(D_r + T_r \land G_{r-1}) - \E_{U} F(D_r + T_r \land U)} \\
	&\leq  \frac{nw}{2^{k/2}} + \frac{nw}{2^{k/2}} + \cdots + \frac{nw}{2^{k/2}}+ \underset{\substack{D_1, D_2, \ldots, D_r \\ T_1, T_2, \ldots, T_r}} \E \norm{ \E_{G_0} f(G_0) - \E_U f(U) }.
	\end{align*}

	Now we must show that the function $f$ is fooled by $G_0$ for most values of $D_i$ and $T_i$. 
	Luckily, for $r$ large enough, $f$ is often a constant function and therefore fooled by any distribution. 

	In particular, let $T_i[j]$ denote the $j$-th bit of $T_i$ and define the indicator random variables 
	$$Y_j = \bigwedge\limits_{i=1}^r T_i[j].$$
	Note that $f(x)$ depends on the $j$-th bit of $x$ only if $Y_j = 1$, and so $f(x)$ is constant if $\sum_{j=1}^n Y_j = 0$.
	Also note that $\prob{Y_j = 1} = 2^{-r}$.
	By applying a Markov inequality, we have

	\begin{align*}
		\E_{\substack{D_1, D_2, \ldots, D_r \\ T_1, T_2, \ldots, T_r}} \norm{ \E_{G_0} f(G_0) - \E_U f(U) }
		&\leq \prob{\sum_{j=1}^n Y_j \geq 1} \underset{\substack{D_1, D_2, \ldots, D_r \\ T_1, T_2, \ldots, T_r}} \max \norm{ \E_{G_0} f(G_0) - \E_U f(U) } \\
		&\leq \E\left[\sum_{j=1}^n Y_j \right] \cdot 2 w^{1/2} \\
		&= \frac{2 n w^{1/2}}{2^r}
		\;.
	\end{align*}

	If we set $k \geq 5 \lg n + 2 \lg w$ and $r \geq 2 \lg n + \frac{1}{2} \lg w$, $G_r$ fools $F$ with error
	\[
		\varepsilon = \norm{\E_{G_r} F(G_r) - \E_U F(U)} \leq r \cdot \frac{nw}{2^{k/2}} + \frac{2 n w^{1/2}}{2^r} \leq \frac{3}{n}
		\;.
		\qedhere
	\]
\end{proof}

\section{Proof of \autoref{mainlem2}}

The proof of \autoref{mainlem2} follows the same story as the previous section with details differing in two places.
First, we derive an analogue of \autoref{prooflem1} which is slightly messier due to our now weaker pseudorandom primitives. 
Secondly, in order to get the best possible seed-length in the small error regime, this time we analyze the base case with a bit more care. In particular, we upgrade the Markov argument to a Chernoff bound for $\gamma$-almost $k$-wise independent variables. 

\begin{lemma} 
	Let
	\[
		H(x) = \sum_{\substack{\alpha \in \F_2^n \\ |\alpha| = k}} \hat{H}_{\alpha} \ch(x)
	\]
	be some function whose Fourier expansion is supported only at degree $k$. 
	Let $D$, $T$, and $U$  denote respectively a $\delta$-biased, a $\gamma$-almost $k$-wise independent, and a uniform distribution over $\F_2^n$.
	Then we have
	\[
		\E_{D,T} \| \E_U H(D + T \land U) \|^2
		\leq \Big( 2^{-k} + \gamma \Big) \left( \delta \cdot \Big( \sum_{|\alpha| = k} \|\hat{H}_\alpha \| \Big)^2 
			+\sum_{|\alpha| = k} \|\hat{H}_\alpha \|^2 \right)
		\;.
	\]
\end{lemma}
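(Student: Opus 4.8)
The proof will re-run that of \autoref{prooflem1} almost verbatim, replacing equalities by inequalities at exactly the two spots where the weaker primitives bite: the $\delta$-bias of $D$ no longer annihilates the off-diagonal terms, and the $\gamma$-almost $k$-wise independence of $T$ turns the exact count $2^{-k}$ into the upper bound $2^{-k}+\gamma$. So the plan is to isolate these two effects and otherwise copy the earlier computation.

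First I would handle the inner expectation over the (still genuinely uniform) $U$. Since $\E_U \chi_\alpha(T\land U) = \E_U (-1)^{\ip{\alpha \land T}{U}} = \ind{\alpha\land T = 0}$ holds for any fixed $T$ regardless of its distribution, we get exactly as before
\[
	\E_U H(D+T\land U) = \sum_{|\alpha|=k}\hat{H}_\alpha\,\ch(D)\,\ind{\alpha\land T = 0}\,,
\]
and hence, with $\ip{\cdot}{\cdot}$ the Frobenius inner product,
\[
	\norm{\E_U H(D+T\land U)}^2 = \sum_{|\alpha|=k}\sum_{|\beta|=k}\ip{\hat{H}_\alpha}{\hat{H}_\beta}\,\chi_{\alpha+\beta}(D)\,\ind{(\alpha\lor\beta)\land T = 0}\,.
\]
Next I take $\E_D$. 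The diagonal terms $\alpha=\beta$ have $\chi_0\equiv 1$, so they contribute exactly $\sum_{|\alpha|=k}\norm{\hat{H}_\alpha}^2\ind{\alpha\land T = 0}$. For $\alpha\neq\beta$ we have $\alpha+\beta\neq 0$, so $\delta$-bias gives $\abs{\E_D \chi_{\alpha+\beta}(D)}\le\delta$, while Cauchy--Schwarz gives $\abs{\ip{\hat{H}_\alpha}{\hat{H}_\beta}}\le\norm{\hat{H}_\alpha}\norm{\hat{H}_\beta}$; thus the off-diagonal part is at most $\delta\sum_{\alpha\neq\beta}\norm{\hat{H}_\alpha}\norm{\hat{H}_\beta}\ind{(\alpha\lor\beta)\land T = 0}$.

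The one step needing care comes when I take $\E_T$: the indicator $\ind{(\alpha\lor\beta)\land T = 0}$ is a function of up to $2k$ coordinates of $T$, which is outside the reach of $\gamma$-almost $k$-wise independence. I would sidestep this with the crude bound $\ind{(\alpha\lor\beta)\land T = 0}\le\ind{\alpha\land T = 0}$, which leaves a genuine $k$-junta, so that $\E_T\ind{\alpha\land T = 0}\le 2^{-k}+\gamma$ for every $\alpha$ of weight $k$. Pulling $\sum_{|\beta|=k}\norm{\hat{H}_\beta}$ out of the off-diagonal sum and applying this estimate termwise then yields
\[
	\E_{D,T}\norm{\E_U H(D+T\land U)}^2 \le (2^{-k}+\gamma)\paren{\delta\,\Big(\textstyle\sum_{|\alpha|=k}\norm{\hat{H}_\alpha}\Big)^2 + \sum_{|\alpha|=k}\norm{\hat{H}_\alpha}^2}\,,
\]
where I used $\sum_{\alpha\neq\beta}\le\sum_{\alpha,\beta}$ to complete the square in the first summand. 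The whole argument is thus a routine perturbation of \autoref{prooflem1}; the only genuinely new ingredient --- and the only place an obstacle could plausibly hide --- is this locality-mismatch fix, i.e.\ deliberately discarding $\beta$ from the OR so that $k$-wise (almost) independence of $T$ remains sufficient.
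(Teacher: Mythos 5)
Your proposal is correct and follows the paper's proof essentially line by line: split the $\alpha,\beta$ double sum into diagonal and off-diagonal parts, use the diagonal/exact contribution $\sum\norm{\hat H_\alpha}^2\ind{\alpha\land T=0}$ as in \autoref{prooflem1}, bound the off-diagonal terms using $\delta$-bias and Cauchy--Schwarz, and complete the square. The one point you flag as the only genuinely new ingredient --- replacing $\ind{(\alpha\lor\beta)\land T = 0}$ by the weaker $\ind{\alpha\land T = 0}$ so that $\gamma$-almost $k$-wise independence suffices even though $|\alpha\lor\beta|$ can be as large as $2k$ --- is in fact a small gap in the paper's own writeup, which asserts $\E_T\ind{(\alpha\lor\beta)\land T=0}\le 2^{-k}+\gamma$ without comment; your majorization makes that step airtight, and is almost surely what the authors had in mind. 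So this is the same proof, but with one implicit step made explicit.
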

\begin{proof}
	As before, we have
	\[
		\| \E_U H(D + T \land U) \|^2 
		= \sum_{|\alpha| = k} \sum_{|\beta| = k} \ip{ \hat{H}_\alpha}{ \hat{H}_\beta} \chi_{\alpha + \beta}(D) \ind{ (\alpha \lor \beta)  \land T= 0}
		\;.
	\]
	We analyze the terms with $\alpha = \beta$ and $\alpha \neq \beta$ separately. 
	For the cross terms we have
	\begin{align*}
		\E_D
		&\E_T \sum_{|\alpha| = k} \sum_{\beta \neq \alpha}\ip{ \hat{H}_\alpha}{ \hat{H}_\beta} \chi_{\alpha + \beta}(D) \ind{(\alpha \lor \beta)  \land T = 0} \\
		&=  \sum_{|\alpha| = k}\sum_{\beta \neq \alpha} \ip{ \hat{H}_\alpha}{ \hat{H}_\beta} \Big( \E_D \chi_{\alpha + \beta}(D) \Big) \Big(  \E_T \ind{(\alpha \lor \beta)  \land T  = 0} \Big) \\
		&\leq  \sum_{|\alpha| = k}\sum_{\beta \neq \alpha} \norm{ \hat{H}_\alpha} \norm{ \hat{H}_\beta} \cdot \delta \cdot   \Big( 2^{-k} + \gamma \Big)  \\
		&\leq  \delta \cdot  \Big( 2^{-k} + \gamma \Big) \Big(  \sum_{|\alpha| = k} \norm{ \hat{H}_\alpha}  \Big)^2
		\;.
	\end{align*}

	For the like terms we have
	\begin{align*}
	\E_T
		&\E_D \sum_{|\alpha| = k} \ip{ \hat{H}_\alpha}{ \hat{H}_\alpha} \chi_0(D) \ind{T \land \alpha = 0} \\
		&= \sum_{|\alpha| = k} \norm{\hat{H}_\alpha}^2 \cdot \E_T \ind{T \land \alpha = 0} \\
		&\leq\sum_{|\alpha| = k} \norm{\hat{H}_\alpha}^2 \cdot \Big(2^{-k} + \gamma \Big)
		\;.
		\qedhere
	\end{align*}
\end{proof}

We now derive the following analogue of \autoref{prooflem2}.

\begin{lemma} 
Suppose $F : \F_2^n \rightarrow \R^{w \times w}$ encodes a branching program.
Let $D$, $T$, and $U$  denote respectively a $\delta$-biased, a $\gamma$-almost $k$-wise independent, and a uniform distribution over $\F_2^n$,
and let
Then $D + T \land U$ fools $F$ with error
$$ \varepsilon =  \norm{\E_{D,T,U}F(D + T \land U) - \E_U F(U)} \leq \left( \sqrt{\delta} \mathcal{L}(n,w;k) + \left(\tfrac{1}{2}\right)^{k/2} + \sqrt{\gamma}  \right) \cdot nw.$$
\end{lemma}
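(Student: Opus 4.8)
The plan is to replay the proof of \autoref{prooflem2} almost verbatim, feeding in the preceding lemma (the analogue of \autoref{prooflem1} for $\delta$-biased $D$ and $\gamma$-almost $k$-wise independent $T$) in place of \autoref{prooflem1}, and tracking the extra error introduced by the weaker primitives. First I apply \autoref{prop1} to write
\[
F = \hat{F}_0 + L + \sum_{i=1}^n H_i \cdot F^{>i},
\]
with $L$ supported on characters of degree $0<|\alpha|<k$ and each $H_i$ supported on degree-$k$ characters in the first $i$ coordinates with $\alpha_i=1$. Since $\E_U F(U)=\hat{F}_0$, linearity of expectation and the triangle inequality give
\[
\norm{\E_{D,T,U}F(D+T\land U) - \E_U F(U)} \le \norm{\E_{D,T,U}L(D+T\land U)} + \sum_{i=1}^n \norm{\E_{D,T,U}H_i(D+T\land U)\,F^{>i}(D+T\land U)}.
\]

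For the low-degree term, exactly as in the proof of \autoref{prooflem1} one has $\E_{D,T,U}\chi_\alpha(D+T\land U) = \bigl(\E_D\chi_\alpha(D)\bigr)\bigl(\E_T\ind{\alpha\land T=0}\bigr)$, the inner $\E_U$ producing $\ind{\alpha\land T=0}$ regardless of the law of $T$; since $|\E_D\chi_\alpha(D)|\le\delta$ for the nonzero $\alpha$ with $0<|\alpha|<k$ and $\E_T\ind{\alpha\land T=0}\in[0,1]$, this gives
\[
\norm{\E_{D,T,U}L(D+T\land U)} \le \delta\sum_{0<|\alpha|<k}\norm{\hat{F}_\alpha} = \delta\sum_{j=1}^{k-1}\mathcal{L}_j(F) \le \delta\,\mathcal{L}(n,w;k) \le \sqrt{\delta}\,\mathcal{L}(n,w;k).
\]
For each $i$, I use that for fixed $D=d$, $T=t$ the factor $H_i(d+t\land U)$ depends only on $U_1,\dots,U_i$ while $F^{>i}(d+t\land U)$ depends only on $U_{i+1},\dots,U_n$, so $\E_U$ factors across the product; then submultiplicativity of the Frobenius norm, the bound $\norm{\E_U F^{>i}(d+t\land U)}\le\E_U\norm{F^{>i}(d+t\land U)}=w^{1/2}$, and Jensen give
\[
\norm{\E_{D,T,U}H_i\,F^{>i}} \le w^{1/2}\Bigl(\E_{D,T}\norm{\E_U H_i(D+T\land U)}^2\Bigr)^{1/2}.
\]
Now the preceding lemma, applied to $H=H_i$ and combined with $\sum_{|\alpha|=k,\,\alpha_i=1}\norm{\hat{F}^{\le i}_\alpha}\le\mathcal{L}_k(F^{\le i})\le\mathcal{L}(n,w;k)$ and $\sum_{|\alpha|=k,\,\alpha_i=1}\norm{\hat{F}^{\le i}_\alpha}^2\le\sum_\alpha\norm{\hat{F}^{\le i}_\alpha}^2=w$ (Parseval for $F^{\le i}$, which again satisfies $\norm{F^{\le i}(x)}^2=w$), yields $\E_{D,T}\norm{\E_U H_i(D+T\land U)}^2 \le (2^{-k}+\gamma)\bigl(\delta\,\mathcal{L}(n,w;k)^2 + w\bigr)$; taking square roots with $(2^{-k}+\gamma)^{1/2}\le 2^{-k/2}+\sqrt{\gamma}$ and $(a+b)^{1/2}\le a^{1/2}+b^{1/2}$ makes term $i$ at most $w^{1/2}(2^{-k/2}+\sqrt{\gamma})(\sqrt{\delta}\,\mathcal{L}(n,w;k)+w^{1/2})$.

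Summing the $n$ terms and adding the $L$-bound, the total is at most $\sqrt{\delta}\,\mathcal{L}(n,w;k) + n(2^{-k/2}+\sqrt{\gamma})(\sqrt{\delta}\,\mathcal{L}(n,w;k)\,w^{1/2}+w)$, and I finish by absorbing cross terms into the claimed bound $\bigl(\sqrt{\delta}\,\mathcal{L}(n,w;k) + (\tfrac12)^{k/2} + \sqrt{\gamma}\bigr)nw$ using the crude inequalities $\delta,\gamma\le 1$, $2^{-k/2}\le w^{1/2}$, $\sqrt{\gamma}\le w^{1/2}$ and $nw\ge1$. The one genuinely structural step is the factorization of $\E_U$ over the two disjoint coordinate blocks, which is where the read-once product form of $F$ enters; everything else is the same arithmetic as in \autoref{prooflem2}. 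The main obstacle is purely the bookkeeping: making sure the three error sources — the $\delta$-bias of $D$ acting on the low-degree part, the $2^{-k}$-versus-$\gamma$ split in the noise bound, and the $\delta\,\mathcal{L}(n,w;k)^2$ cross term inside the square root — collapse cleanly without losing an extra factor of $n$, $w$, or $\mathcal{L}(n,w;k)$ relative to the stated bound; in particular the low-degree piece must be folded in via $\delta\le\sqrt{\delta}$, and the square root of a product must first be split into an honest sum before summing over $i$.
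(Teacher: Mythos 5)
Your proposal is correct and follows the paper's proof essentially verbatim: decompose $F$ via \autoref{prop1}, kill the low-degree part $L$ using the $\delta$-bias of $D$, factor $\E_U$ across the disjoint coordinate blocks in each $H_i\cdot F^{>i}$ term, and then apply Cauchy--Schwarz, Jensen, the weak-primitive analogue of \autoref{prooflem1}, and Parseval for $F^{\le i}$. The small constant-factor slack you pick up when absorbing cross terms is also present in the paper's own final inequality and is immaterial given the $O(\cdot)$ in \autoref{mainlem2}.
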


\begin{proof}
	Again we use \autoref{prop1} to split $F$ into high and low degree components:

	\begin{align*}
	\| \E_{D,T,U} F(D + T \land U) - \E_{U} F(U) \| &\\
	\leq  \| \E_{D,T,U} L(D + T \land U)  \|& + \sum_{i=1}^n \| \E_{D,T,U} H_i(D + T \land U) F^{> i}(D + T \land U) \|.
	\end{align*}

	For the low-degree component we have

	\begin{align*}
	\norm{ \E_{D,T,U} F(D + T \land U) }
	&= \norm{ \E_D \E_T \E_U \sum_{\substack{\alpha \in \F_2^n \\ 0 < |\alpha| < k}} \hat{F}_\alpha \chi_\alpha(D + T \land U) } \\
	&= \norm{ \E_T \E_U \sum_{\substack{\alpha \in \F_2^n \\ 0 < |\alpha| < k}} \hat{F}_\alpha \Big( \E_D \ch(D) \Big) \ch(T \land U)  }\\
	&\leq \sum_{\substack{\alpha \in \F_2^n \\ 0 < |\alpha| < k}} \norm{ \hat{F}_\alpha } \cdot \delta \\
	&= \delta \sum_{i=1}^{k-1} \mathcal{L}_i(F).
	\end{align*}

	Now we proceed as before.

	\begin{align*}
		\| \E_{D,T,U} F(D + T \land U) - &\E_{U} F(U) \| \\
		&\le\delta \sum_{i=1}^{k-1} \mathcal{L}_i(F) + \sum_{i=1}^n \| \E_{D,T,U} H_i(D + T \land U) F^{> i}(D + T \land U) \| \\
		&\le\delta \sum_{i=1}^{k-1} \mathcal{L}_i(F) + \sum_{i=1}^n  \Big( \E_{D,T} \|\E_U H_i(D + T \land U) \|^2 \Big)^{1/2} w^{1/2} \\
		&\le\delta \sum_{i=1}^{k-1} \mathcal{L}_i(F)  + \sum_{i=1}^n \Big( \delta \mathcal{L}_k(F^{\leq i})^2 + \left(\tfrac{1}{2}\right)^k + \gamma \Big)^{1/2} w \\
		&\le\left( \sqrt{\delta} \mathcal{L}(n,w;k) + \left(\tfrac{1}{2}\right)^{k/2} + \sqrt{\gamma}  \right) \cdot nw
		\;.
		\qedhere
	\end{align*}
\end{proof}

\subsection{Proof of \autoref{mainlem2}}

\begin{proof}
	We proceed as in the proof of \autoref{mainlem1}, except this time we derive a sharper bound on the quantity
	$$ \underset{\substack{D_1, D_2, \ldots, D_r \\ T_1, T_2, \ldots, T_r}} \E \norm{ \E_{G_0^*} f(G_0^*) - \E_U f(U) }. $$
	Again, define the random variable
	$$ Y = T_1 \land T_2 \land \cdots \land T_r.$$
	Recall that $G_0^*$ is a $320 k$-wise independent distribution, so if $|Y| \leq 320 k$ then 
	$$ \norm{ \E_{G_0^*} f(G_0^*) - \E_U f(U) } = 0.$$
	Therefore

	\begin{align*}
	\underset{\substack{D_1, D_2, \ldots, D_r \\ T_1, T_2, \ldots, T_r}} \E \norm{ \E_{G_0^*} f(G_0^*) - \E_U f(U) }
	&\leq \prob{|Y| \geq 320 k} \underset{\substack{D_1, D_2, \ldots, D_r \\ T_1, T_2, \ldots, T_r}} \max \norm{ \E_{G_0^*} f(G_0^*) - \E_U f(U) } \\
	&\leq \prob{|Y| \geq 320 k} \cdot 2 w^{1/2}.
	\end{align*}

	We appeal to the following extension of the Chernoff bound for $k$-wise independent variables.
	\begin{lemma}{\textnormal{(see \cite{svw14}, Lemma A.1)} }
	Suppose $X_1, X_2, \ldots X_t$ are $\gamma$-almost $k$-wise independent variables with $X_i \in \{0,1\}$. Then
	$$ \prob{\frac{1}{t}\sum_{i=1}^t X_i \geq \frac{1}{2} + \frac{a}{2}} \leq \left(\frac{40k}{a^2 t} \right)^{\lfloor k/2 \rfloor} + 2 \gamma \left( \frac{2}{a}\right)^k.$$
	\end{lemma}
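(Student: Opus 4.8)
The plan is to run the standard bounded-independence moment-method argument, carefully tracking the error introduced by the word ``almost''. Throughout write $[t] = \{1,\dots,t\}$ and let $D$ denote the $\gamma$-almost $k$-wise independent joint distribution of $(X_1,\dots,X_t) \in \F_2^t$. We may assume $0 < a \le 1$, since otherwise the event $\tfrac1t\sum_i X_i \ge \tfrac12+\tfrac a2 > 1$ is impossible for $\{0,1\}$-valued $X_i$ and there is nothing to prove. Set $\ell = \lfloor k/2 \rfloor$ and put $Y_i \eqdef (-1)^{X_i} = 1 - 2X_i \in \{-1,1\}$, so that $\sum_i X_i = \tfrac12\paren{t - \sum_i Y_i}$; hence the event $\tfrac1t\sum_i X_i \ge \tfrac12+\tfrac a2$ is exactly $\sum_i Y_i \le -at$, which forces $\paren{\sum_i Y_i}^{2\ell} \ge (at)^{2\ell}$. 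Applying Markov's inequality to the nonnegative variable $\paren{\sum_i Y_i}^{2\ell}$ reduces the whole statement to bounding a single even moment:
\[
	\prob{\tfrac1t \sum_{i=1}^t X_i \ge \tfrac12 + \tfrac a2}
	\;\le\; \frac{\E_D\bigl[\,\bigl(\sum_{i=1}^t Y_i\bigr)^{2\ell}\,\bigr]}{(at)^{2\ell}}
	\;.
\]

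Next I would expand $\paren{\sum_i Y_i}^{2\ell} = \sum_{\mathbf i \in [t]^{2\ell}} \prod_{j=1}^{2\ell} Y_{i_j}$ and reduce each monomial modulo the relation $Y_i^2 = 1$: for a multi-index $\mathbf i$ one has $\prod_j Y_{i_j} = \prod_{i \in A(\mathbf i)} Y_i = (-1)^{\sum_{i \in A(\mathbf i)} X_i}$, where $A(\mathbf i) \subseteq [t]$ is the set of indices occurring an odd number of times in $\mathbf i$; in particular $|A(\mathbf i)| \le 2\ell \le k$. So every such monomial is a Fourier character of $X$ supported on at most $k$ coordinates, with uniform expectation $1$ when $A(\mathbf i) = \emptyset$ and $0$ otherwise. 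Consequently a ``balanced'' multi-index, meaning one in which every index occurs an even number of times (equivalently $A(\mathbf i) = \emptyset$), contributes exactly $1$ to $\E_D[\,\cdot\,]$, while any other multi-index contributes a quantity of absolute value at most $\gamma$, directly by $\gamma$-almost $k$-wise independence (the character has a nonempty support of size at most $k$, so its $D$-expectation is within $\gamma$ of its uniform expectation, which is $0$). Writing $N_0$ for the number of balanced multi-indices in $[t]^{2\ell}$ and bounding the at most $t^{2\ell}$ remaining terms by $\gamma$ apiece, this gives $\E_D\bigl[\bigl(\sum_i Y_i\bigr)^{2\ell}\bigr] \le N_0 + \gamma\, t^{2\ell}$.

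It then remains to bound $N_0$, which has become an independence-free quantity: it equals $\E_\epsilon[R^{2\ell}]$ for $R = \sum_{i=1}^t \epsilon_i$ a sum of $t$ independent uniform $\pm1$ variables, since $\E_\epsilon[\prod_j \epsilon_{i_j}]$ is $1$ precisely for balanced $\mathbf i$ and $0$ otherwise. I would estimate this Rademacher even moment through its moment generating function: from the pointwise inequality $\cosh y \ge y^{2\ell}/(2\ell)!$, the symmetry of $R$, and $\E[e^{\lambda \epsilon_i}] = \cosh\lambda \le e^{\lambda^2/2}$, one gets for every $\lambda > 0$
\[
	\E_\epsilon[R^{2\ell}] \;\le\; \frac{(2\ell)!}{\lambda^{2\ell}}\,\E_\epsilon[\cosh(\lambda R)] \;=\; \frac{(2\ell)!}{\lambda^{2\ell}}\,\cosh(\lambda)^t \;\le\; \frac{(2\ell)!}{\lambda^{2\ell}}\,e^{\lambda^2 t/2}
	\;,
\]
and taking $\lambda = \sqrt{2\ell/t}$ yields $N_0 \le (2\ell)!\,\bigl(t/(2\ell)\bigr)^{\ell}e^{\ell} \le (2\ell)^{2\ell}\bigl(t/(2\ell)\bigr)^{\ell}e^{\ell} = (2e\ell)^{\ell}t^{\ell} \le (ek)^{\ell}t^{\ell} \le (40k)^{\ell}t^{\ell}$, using $(2\ell)! \le (2\ell)^{2\ell}$, $2\ell \le k$, and $e < 40$. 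Substituting back,
\[
	\prob{\tfrac1t \sum_{i=1}^t X_i \ge \tfrac12 + \tfrac a2}
	\;\le\; \frac{(40k)^{\ell}t^{\ell} + \gamma\,t^{2\ell}}{(at)^{2\ell}}
	\;=\; \paren{\frac{40k}{a^2 t}}^{\ell} + \frac{\gamma}{a^{2\ell}}
	\;,
\]
and since $0 < a \le 1$ and $2\ell \le k$ we have $a^{-2\ell} \le a^{-k} \le 2(2/a)^{k}$, so with $\ell = \lfloor k/2 \rfloor$ this yields the claimed bound.

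I expect the only genuinely non-routine step to be the estimate $N_0 \le (Ck)^{\ell}t^{\ell}$ with a constant $C$ that is \emph{linear} in $k$: a direct combinatorial count of balanced multi-indices (via Bell-number-type bounds) produces a base growing quadratically in $k$ — something like $(k^2/a^2 t)^{k/2}$ — which is too weak for the stated bound, and passing through the moment generating function of the Rademacher sum is exactly what keeps the base linear. Everything else — Markov applied to an even moment, the collapse $Y_i^2 = 1$, and the term-by-term use of $\gamma$-almost $k$-wise independence on characters of support size $\le 2\ell \le k$ — is routine.
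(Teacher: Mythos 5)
The paper gives no proof of this lemma; it is cited from~\cite{svw14} (their Lemma~A.1) and used as a black box in the proof of \autoref{mainlem2}, so there is nothing in the paper to compare your argument against.

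Your blind proof is correct, and it is the standard moment-method Chernoff argument adapted to the almost-$k$-wise setting. Applying Markov to the $2\ell$-th moment with $\ell = \lfloor k/2 \rfloor$, collapsing monomials modulo $Y_i^2 = 1$ to parities on at most $2\ell \le k$ coordinates, charging $\gamma$ per nonempty parity (which is exactly what $\gamma$-almost $k$-wise independence licenses, since each such parity is a $[-1,1]$-valued function of at most $k$ bits with uniform expectation $0$), and recognizing the balanced count $N_0$ as the Rademacher $2\ell$-th moment are all sound; the resulting bound $\E_D\!\left[(\sum_i Y_i)^{2\ell}\right] \le N_0 + \gamma t^{2\ell}$ is the right way to isolate the ``almost'' error. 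The Laplace-transform estimate $N_0 \le (2\ell)!\,(t/2\ell)^\ell e^\ell \le (2e\ell)^\ell t^\ell \le (40k)^\ell t^\ell$ is valid, using in order $\cosh y \ge y^{2\ell}/(2\ell)!$, the symmetry of $R$ to rewrite $\E[\cosh(\lambda R)]$ as $(\cosh\lambda)^t$, $\cosh\lambda \le e^{\lambda^2/2}$, $(2\ell)! \le (2\ell)^{2\ell}$, $2\ell \le k$, and $e < 40$. The final slack-up $\gamma a^{-2\ell} \le 2\gamma(2/a)^k$ is correct for $0 < a \le 1$, which you correctly note is WLOG since for $a > 1$ the event is empty. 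What you derive is in fact slightly sharper than the stated constants, and your closing remark that passing through the Rademacher moment generating function (rather than a direct combinatorial count of balanced multi-indices) is what keeps the base linear in $k$ is on point.
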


	As a result, if $T$ is a $\gamma$-almost $k$-wise independent distribution and $\alpha$ is any fixed bitmask with hamming weight $|\alpha| \geq 320k$, we have
	$$ \prob{|\alpha \land T| \geq \tfrac{3}{4} |\alpha|} \leq \left( \tfrac{1}{2} \right)^{\lfloor k/2 \rfloor} + 2 \gamma \cdot 4^k.$$
	Noting that $ \left(\frac{3}{4}\right)^r n \leq 1$, a simple union bound argument shows that
	$$ \prob{|T_1 \land T_2 \land \cdots \land T_r| \geq 320 k} \leq r \cdot \left( \left( \tfrac{1}{2} \right)^{\lfloor k/2 \rfloor} + 2 \gamma \cdot 4^k \right) .$$

	To conclude, we have
	\begin{align*}
		\norm{\E_{G_r^*} F(G_r^*) - \E_U F(U)} 
		&\leq r \cdot \left( \sqrt{\delta} \mathcal{L}(n,w;k) + \left(\tfrac{1}{2}\right)^{k/2} + \sqrt{\gamma}  \right) \cdot nw 
		+ r \cdot \left( \left( \tfrac{1}{2} \right)^{\lfloor k/2 \rfloor} + 2 \gamma \cdot 4^k \right) \cdot 2w^{1/2} \\
		&\leq O\paren{ \left( \sqrt{\delta} \mathcal{L}(n,w;k) + \left(\tfrac{1}{2}\right)^{k/2} + \sqrt{\gamma} + \gamma 4^k \right) \cdot nwr}
		\;.
		\qedhere
	\end{align*}
\end{proof}

\bibliography{prg-hlv}{}

\newcommand{\etalchar}[1]{$^{#1}$}
\begin{thebibliography}{GMR{\etalchar{+}}12}

\bibitem[AGHP92]{aghp92}
Noga Alon, Oded Goldreich, Johan H{\aa}stad, and Ren{\'e} Peralta.
\newblock Simple constructions of almost k-wise independent random variables.
\newblock {\em Random Structures \& Algorithms}, 3(3):289--304, 1992.

\bibitem[BCG17]{bcg17}
Mark Braverman, Gil Cohen, and Sumegha Garg.
\newblock Hitting sets with near-optimal error for read-once branching
  programs.
\newblock 2017.

\bibitem[BPW11]{bpw11}
Andrej Bogdanov, Periklis~A Papakonstaninou, and Andrew Wan.
\newblock Pseudorandomness for read-once formulas.
\newblock In {\em Foundations of Computer Science (FOCS), 2011 IEEE 52nd Annual
  Symposium on}, pages 240--246. IEEE, 2011.

\bibitem[BV10]{bv10}
Joshua Brody and Elad Verbin.
\newblock The coin problem and pseudorandomness for branching programs.
\newblock In {\em Foundations of Computer Science (FOCS), 2010 51st Annual IEEE
  Symposium on}, pages 30--39. IEEE, 2010.

\bibitem[CHRT17]{chrt17}
Eshan Chattopadhyay, Pooya Hatami, Omer Reingold, and Avishay Tal.
\newblock Improved pseudorandomness for unordered branching programs through
  local monotonicity.
\newblock In {\em Electronic Colloquium on Computational Complexity (ECCC),
  pages TR17--171}, 2017.

\bibitem[GMR{\etalchar{+}}12]{gmrtv12}
Parikshit Gopalan, Raghu Meka, Omer Reingold, Luca Trevisan, and Salil Vadhan.
\newblock Better pseudorandom generators from milder pseudorandom restrictions.
\newblock In {\em Foundations of Computer Science (FOCS), 2012 IEEE 53rd Annual
  Symposium on}, pages 120--129. IEEE, 2012.

\bibitem[GR14]{gr14}
Anat Ganor and Ran Raz.
\newblock {Space Pseudorandom Generators by Communication Complexity Lower
  Bounds}.
\newblock In Klaus Jansen, Jos{\'e} D.~P. Rolim, Nikhil~R. Devanur, and
  Cristopher Moore, editors, {\em Approximation, Randomization, and
  Combinatorial Optimization. Algorithms and Techniques (APPROX/RANDOM 2014)},
  volume~28 of {\em Leibniz International Proceedings in Informatics (LIPIcs)},
  pages 692--703, Dagstuhl, Germany, 2014. Schloss Dagstuhl--Leibniz-Zentrum
  fuer Informatik.

\bibitem[HLV17]{hlv17}
Elad Haramaty, Chin~Ho Lee, and Emanuele Viola.
\newblock Bounded independence plus noise fools products.
\newblock In {\em LIPIcs-Leibniz International Proceedings in Informatics},
  volume~79. Schloss Dagstuhl-Leibniz-Zentrum fuer Informatik, 2017.

\bibitem[IMZ12]{imz12}
Russell Impagliazzo, Raghu Meka, and David Zuckerman.
\newblock Pseudorandomness from shrinkage.
\newblock In {\em Foundations of Computer Science (FOCS), 2012 IEEE 53rd Annual
  Symposium on}, pages 111--119. IEEE, 2012.

\bibitem[INW94]{inw94}
Russell Impagliazzo, Noam Nisan, and Avi Wigderson.
\newblock Pseudorandomness for network algorithms.
\newblock In {\em Proceedings of the twenty-sixth annual ACM symposium on
  Theory of computing}, pages 356--364. ACM, 1994.

\bibitem[LV17]{lv17}
Chin~Ho Lee and Emanuele Viola.
\newblock More on bounded independence plus noise: Pseudorandom generators for
  read-once polynomials.
\newblock 2017.

\bibitem[Nis92]{n92}
Noam Nisan.
\newblock Pseudorandom generators for space-bounded computation.
\newblock {\em Combinatorica}, 12(4):449--461, 1992.

\bibitem[NN93]{nn93}
Joseph Naor and Moni Naor.
\newblock Small-bias probability spaces: Efficient constructions and
  applications.
\newblock {\em SIAM journal on computing}, 22(4):838--856, 1993.

\bibitem[RSV13]{rsv13}
Omer Reingold, Thomas Steinke, and Salil Vadhan.
\newblock Pseudorandomness for regular branching programs via fourier analysis.
\newblock In {\em Approximation, Randomization, and Combinatorial Optimization.
  Algorithms and Techniques}, pages 655--670. Springer, 2013.

\bibitem[SVW14]{svw14}
Thomas Steinke, Salil Vadhan, and Andrew Wan.
\newblock Pseudorandomness and fourier growth bounds for width 3 branching
  programs.
\newblock {\em arXiv preprint arXiv:1405.7028}, 2014.

\bibitem[Tzu09]{t09}
Yoav Tzur.
\newblock {\em Notions of weak pseudorandomness and GF$(2^n)$-polynomials}.
\newblock PhD thesis, Master's thesis, Weizmann Institute of Science, Rehovot,
  Israel, 2009.

\bibitem[Vad12]{v12}
Salil~P. Vadhan.
\newblock Pseudorandomness.
\newblock {\em Foundations and Trends in Theoretical Computer Science},
  7(1--3):1--336, 2012.

\end{thebibliography}
\bibliographystyle{alpha}

\end{document}